\DeclareSymbolFont{timesoperators}{T1}{ptm}{m}{n}
\def\emptyset{\mathord{\centernot{\text{\rm$\Circle$}}}}
\newcommand{\eqdef}{\stackrel{\mbox{\rm\tiny def}}{=}}
\renewcommand{\operator@font}{\mathgroup\symtimesoperators}
\colorlet{darkred}{red!90!black}
\colorlet{darkblue}{blue!90!black}
\colorlet{lightblue}{blue!50}
\def\Wick#1{\mathord{{:}{#1}{:}}}
\newtheorem{theorem}{\bf  {Theorem}}
\newtheorem{proposition}[theorem]{\bf {Proposition}}
\newtheorem{lemma}[theorem]{\bf Lemma}
\newtheorem{corollary}[theorem]{\bf Corollary}
\newtheorem{remark}[theorem]{\bf Remark}
\numberwithin{theorem}{section}
\colorlet{symbols}{black}
\colorlet{testcolor}{green!60!black}
\tikzset{
	root/.style={circle,fill=testcolor,inner sep=0pt, minimum size=2mm},
	broot/.style={circle,fill=gray,inner sep=0pt, minimum size=2mm},
	dot/.style={circle,fill=black,inner sep=0pt, minimum size=1mm},
		reddot/.style={circle,fill=red,inner sep=0pt, minimum size=1mm},
			bluedot/.style={circle,fill=blue,inner sep=0pt, minimum size=1mm},
	eps/.style={circle,fill=white,draw=symbols,inner sep=0pt,minimum size=0.8mm},
	int/.style={circle,fill=black,draw=black,inner sep=0pt,minimum size=0.7mm},
	var/.style={circle,fill=black!10,draw=black,inner sep=0pt, minimum size=2mm},
	dotred/.style={circle,fill=black!50,inner sep=0pt, minimum size=2mm},
	generic/.style={semithick,shorten >=1pt,shorten <=1pt},
	dist/.style={ultra thick,draw=testcolor,shorten >=1pt,shorten <=1pt},
	testfcn/.style={ultra thick,testcolor,shorten >=1pt,shorten <=1pt,->},
	testfcnx/.style={ultra thick,testcolor,shorten >=1pt,shorten <=1pt,<-,
		postaction={decorate,decoration={markings,mark=at position 0.6 with {\drawx}}}},
	keps/.style={semithick,shorten >=1pt,shorten <=1pt,densely dashed,->},
	kprimex/.style={semithick,shorten >=1pt,shorten <=1pt,densely dashed,->,
		postaction={decorate,decoration={markings,mark=at position 0.4 with {\drawx}}}},
	kernel/.style={semithick,shorten >=1pt,shorten <=1pt,->},
	multx/.style={shorten >=1pt,shorten <=1pt,
		postaction={decorate,decoration={markings,mark=at position 0.5 with {\drawx}}}},
	kernelBig/.style={semithick,shorten >=1pt,shorten <=1pt,decorate, decoration={zigzag,amplitude=1.5pt,segment length = 3pt,pre length=2pt,post length=2pt}},
	kernelBig2/.style={semithick,shorten >=1pt,shorten <=1pt,decorate,->, decoration={snake,amplitude=1.5pt,segment length = 3pt,pre length=2pt,post length=5pt}},
	rho/.style={dotted,semithick,shorten >=1pt,shorten <=1pt},
	renorm/.style={shape=circle,fill=white,inner sep=1pt},
	labl/.style={shape=rectangle,fill=white,inner sep=1pt},
	H/.style={circle,fill=blue!10,draw=symbols,inner sep=0pt,minimum size=1.3mm},
	xi/.style={regular polygon sides=4,fill=red!30,draw=symbols,inner sep=0pt,minimum size=1.1mm},
	xix/.style={crosscircle,fill=symbols!10,draw=symbols,inner sep=0pt,minimum size=1.2mm},
	xib/.style={circle,fill=symbols!10,draw=symbols,inner sep=0pt,minimum size=1.6mm},
	xibx/.style={crosscircle,fill=symbols!10,draw=symbols,inner sep=0pt,minimum size=1.6mm},
	not/.style={circle,fill=symbols,draw=symbols,inner sep=0pt,minimum size=0.5mm},
Wick/.style={rectangle, draw=blue!80,rounded corners=3pt,fill=blue!5},
	>=stealth,
	not/.style={circle,fill=symbols,draw=symbols,inner sep=0pt,minimum size=0.5mm},
kernels2/.style={very thick,segment length=12pt},
	}
\def\DeclareSymbol#1#2#3{%
	\expandafter\gdef\csname MH@symb@#1\endcsname{\tikzsetnextfilename{symbol#1}%
		\tikz[baseline=#2,scale=0.15,draw=symbols,line join=round]{#3}}%
	\expandafter\gdef\csname MH@symb@#1s\endcsname{\scalebox{0.75}{\tikzsetnextfilename{symbol#1}%
			\tikz[baseline=#2,scale=0.15,draw=symbols,line join=round]{#3}}}%
	\expandafter\gdef\csname MH@symb@#1ss\endcsname{\scalebox{0.65}{\tikzsetnextfilename{symbol#1}%
			\tikz[baseline=#2,scale=0.15,draw=symbols,line join=round]{#3}}}%
}
\def\<#1>{\ifthenelse{\boolean{mmode}}{\mathchoice{\csname MH@symb@#1\endcsname}{\csname MH@symb@#1\endcsname}{\csname MH@symb@#1s\endcsname}{\csname MH@symb@#1ss\endcsname}}{\csname MH@symb@#1\endcsname}}
\def\id{\mathrm{id}}
\def\CC{\mathcal{C}}
\def\CG{\mathcal{G}}
\def\CD{\mathcal{D}}
\def\CF{\mathcal{F}}
\def\CH{\mathcal{H}}
\def\CO{\mathcal{O}}
\def\CN{\mathcal{N}}
\def\CY{\mathcal{Y}}
\def\CX{\mathcal{X}}
\def\CU{\mathcal{U}}
\def\CE{\mathcal{E}}
\def\R{\mathbf{R}}
\def\C{\mathbf{C}}
\let\phi\varphi
\def\scal#1{\langle #1\rangle}
\def\${|\!|\!|}
\def\Vol{\operatorname{Vol}}
\let\eps\varepsilon
\def\det{\operatorname{det}}
\DeclareMathOperator{\tr}{tr}
\DeclareMathOperator{\Ob}{Ob}
\def\E{\mathbf{E}}
\def\P{\mathbf{P}}
\def\D{\mathbf{D}}
\def\out{o}
\def\inc{i}
\def\Hil{\mathbf{Hil}}
\def\f#1#2{\textstyle{\frac{#1}{#2}}}
\let\d\partial
\def\restr{\mathbin{\upharpoonright}}
\def\dash{\leavevmode\unskip\kern0.18em--\penalty\exhyphenpenalty\kern0.18em}
\def\slash{\leavevmode\unskip\kern0.15em/\penalty\exhyphenpenalty\kern0.15em}
\DeclareRobustCommand{\TitleEquation}[2]{\texorpdfstring{\StrLeft{\f@series}{1}[\@firstchar]$\if%
		b\@firstchar\boldsymbol{#1}\else#1\fi$}{#2}}
\DeclareRobustCommand{\cev}[1]{%
  {\mathpalette\do@cev{#1}}%
}
\newcommand{\do@cev}[2]{%
  \vbox{\offinterlineskip
    \sbox\z@{$\m@th#1 x$}%
    \ialign{##\cr
      \hidewidth\reflectbox{$\m@th#1\vec{}\mkern7mu$}\hidewidth\cr
      \noalign{\kern-\ht\z@}
      $\m@th#1#2$\cr
    }%
  }%
}
\begin{document}
	\title{Probabilistic interpretation of quantum field theories\\[.5em]
	\small After Guillarmou, Kupiainen, Rhodes, Vargas, et al.}
	\author{Martin Hairer$^{1,2}$ \orcidlink{0000-0002-2141-6561}}
	\institute{École Polytechnique Fédérale de Lausanne, 1015 Lausanne, Switzerland 
	\and Imperial College London, London SW7 2AZ, United Kingdom\\
		\email{martin.hairer@epfl.ch}}
	
	\maketitle
	
	\begin{abstract}
In this note we provide a gentle introduction to the concepts and intuition
behind the recent breakthrough results on the mathematically rigorous construction 
of a non-trivial 
2D conformal field theory, namely the so-called Liouville theory. This gives us the
opportunity to review Segal's axioms for conformal field theories and to discuss in
some detail how the free field fits into them.\\[.5em]
\noindent These notes were written for the Séminaire Bourbaki given by the author in
Paris on 1 February 2025.

\vspace{1em}

\noindent{\it MSC2020:} 81T40, 81T08

\noindent {\it Keywords:} QFT, CFT, Liouville theory
	\end{abstract}
	
	\setcounter{tocdepth}{2}
	\tableofcontents
	
\section{Introduction}

In this note we provide a gentle introduction to a small selection of 
the concepts and intuition
behind the recent breakthrough results by Guillarmou, Kupiainen, Rhodes, Vargas, and
coauthors \cite{MR3465434,MR3832670,MR4029827,MR4060417,MR4749808,MR4816634,Bootstrap} 
(see in particular the review article \cite{Review}) on the mathematically rigorous construction and analysis of a completely integrable non-trivial 
2D conformal field theory, namely the so-called Liouville theory first introduced
by Polyakov in \cite{Poly1,Poly2,Poly3}.
We will spend some time on trying to understand
``$0$-dimensional QFT'', i.e.\ simply a single quantum mechanical 
particle, but from the perspective of rigorous path integration. 
This will naturally lead us to a form of Segal's axioms, which we then 
generalise to the two-dimensional conformal case.

We will then discuss in some detail how the Gaussian free field, reweighted by some 
local and coercive potential $V$ fits into this framework and satisfies Segal's axioms.
In the last section, we will finally introduce the case of Liouville theory.
Besides the functorial properties encoded in Segal's axions, this theory also
exhibits a form of conformal invariance which we discuss. In particular, we will
motivate the definition of the theory's central charge, as well as the Seiberg bounds
on the weights of its insertions. Due to a lack of both time and space, we'll leave
out most of the recent advances in this area, in particular the proof
of the DOZZ formula and the proof of the conformal bootstrap. The hope is rather that
after going through these notes the reader is equipped with some of the background
material required to read the original articles.

Most of the material of this note is based on \cite{Review,Bootstrap}, with significant
inspiration from \cite{SegalSimple}.

\section{The \TitleEquation{0}{0}-dimensional case}

Before we turn to quantum field theories, let us recall the path integral formulation
of a classical one-particle quantum system. This can be interpreted as a
``$0$-dimensional QFT'' and we will take this as a starting point to ``guess'' what a 
higher-dimensional QFT should look like. Throughout this article, we will 
only consider spin-less particles, which are then necessarily bosons.
Given a potential function $V \colon \R^d \to \R$ which we are going to assume 
smooth and bounded from below, the quantum mechanical Hamiltonian $H$ describing
the motion of a particle in the potential $V$ is the operator
\begin{equ}[e:defH]
H = -\f12\Delta + V\;,
\end{equ}
where $\Delta$ denotes the usual Laplacian on $\R^d$. We can realise $H$ as
a selfadjoint operator on $\CH = L^2(\R^d)$ by noting that the 
quadratic form 
\begin{equ}
B(\Phi,\Psi) = \int \bigl(\scal{\nabla \Phi(x),\nabla \Psi(x)} + V(x) \Phi(x)\Psi(x)\bigr)\,dx\;,
\end{equ}
defined  on $\CC_0^\infty \subset \CH$ is symmetric, positive and closable. Writing
$\CD(B)$ for the domain of its closure, it is a classical fact 
\cite{Friedrichs,Kato} that the  operator $H$ such that 
\begin{equ}[e:defDomain]
\CD(H) = \{\Phi \in \CD(B)\,|\, \exists \hat \Phi\in \CH \;\forall \Psi \in \CD(B)\,:\, \scal{\hat \Phi,\Psi} = B(\Phi,\Psi)\}
\end{equ}
and $H\Phi = \hat \Phi$ (with $\hat \Phi$ given as in \eqref{e:defDomain}, which
is unique by the density of $\CD(B)$ in $\CH$ and Riesz's representation theorem)
is indeed selfadjoint and agrees with \eqref{e:defH} on $\CC_0^\infty$.

Given such a Hamiltonian (which will in general be a selfadjoint operator that 
is bounded from below), a \textit{state} $\psi$ for the corresponding 
quantum-mechanical system
is a ray\footnote{namely a linear subspace of (complex) dimension one} in the complexification of the Hilbert space $\CH$. The evolution of such
a state is then given by the solution to the Schrödinger equation, namely
\begin{equ}
\d_t \psi = -iH\psi\;,
\end{equ}
where we identify $\psi$ with one of its representatives in $\CH$ (by linearity it doesn't
matter which one). This is of course nothing but the strongly continuous 
group of unitary operators
generated by the anti self-adjoint operator $-iH$. Conversely, given such a group, 
one can recover $H$ uniquely (but it need not be bounded from below).

\subsection{Path integral representation}
\label{sec:pathInt}

On the other hand, given $H$ as above, we can consider the \textit{semigroup}
$P_t^V$ generated by $-H$ and, conversely, any strongly continuous semigroup
of selfadjoint operators on $\CH$ is generated by a selfadjoint operator
that is bounded from below. Therefore, in order to identify $H$ and therefore 
to ``construct'' a quantum field theory, it is sufficient to 
construct the corresponding ``heat semigroup'' $P_t^V$ (since this is 
precisely what it is when $V\equiv 0$).
As a consequence of the Feynman--Kac formula \cite{Kac}, one has the 
following stochastic representation of $P_t^V$:
\begin{equ}[e:reprPt]
\bigl(P_t^V F\bigr)(x) = \E_x \biggl(F(\Phi_t)\exp \Bigl(-\int_0^t V(\Phi_s)\,ds\Bigr)\biggr)\;.
\end{equ}
Here, under the expectation $\E_x$, $\Phi$ is a standard Brownian motion
starting from the location $x$ at time $0$. The reason why we use this strange
notation (as opposed to $B$ or $W$) is that $\Phi$ will play the role of a field
later on.

A very fruitful idea that leads to the ``correct'' intuition is to formally rewrite \eqref{e:reprPt} 
as integral against the non-existing ``Lebesgue measure'' on the space of functions.
For this, recall that if $\mu$ is a Gaussian measure on $\R^N$ with covariance matrix $C$, then
it has a density with respect to Lebesgue measure given by
\begin{equ}[e:densityGaussian]
\mu(dx) = \frac{1}{\sqrt{\det(2\pi C)}} \exp \bigl(-\f12 \scal{x,C^{-1} x}\bigr)\,dx\;.
\end{equ}
In the case of Brownian motion, its covariance operator $C$ is the integral operator on
$L^2([0,T])$ (say) with kernel given by $C(s,t) = s \wedge t$.
If $\Phi \colon [0,T]\to \R$ is a smooth function 
with $\Phi(0) = 0$ and $\dot \Phi(T) = 0$, then an integration by parts on the first term shows that
\begin{equs}
\bigl(C\ddot \Phi\bigr)(t) &= \int_0^t s \ddot \Phi(s)\,ds + t\int_t^T \ddot \Phi(s)\,ds\\
&= t \dot \Phi(t) - \int_0^t \dot\Phi(s)\,ds - t \dot \Phi(t)
= - \Phi(t)\;.
\end{equs}
This shows that $C^{-1}$ is nothing but the operator $-\d_t^2$ with the abovementioned
boundary conditions. In view of \eqref{e:densityGaussian}, it is then very tempting to rewrite
\eqref{e:reprPt} as
\begin{equ}[e:pathInt]
\bigl(P_t^V F\bigr)(x) = Z^{-1}\int F(\Phi_t)\delta_x(\Phi_0)\exp \Bigl(-\int_0^t\bigl(\f12 |\dot \Phi_s|^2 + V(\Phi_s)\bigr)\,ds\Bigr)\,d\Phi\;,
\end{equ}
for some normalisation constant $Z$. This is of course completely nonsensical (for starters 
the constant $Z$
involves a factor $(2\pi)^\infty$ and the determinant of the unbounded operator $-\d_t^2$),
but it is nevertheless a powerful guide for our intuition of what a QFT should be.

In particular, the semigroup property of $P_t^V$ (which we recall is crucial in order to extract from 
it the Hamiltonian operator $H$) now appears naturally as a consequence of the fact that the 
expression appearing inside the exponential is the integral of a local expression of the field $\Phi$.
At the mathematically rigorous level, we recall that given $x,y \in \R^d$, a Brownian motion $\Phi$ 
starting at $\Phi_0 = x$ and conditioned to have $\Phi_t = y$ can be decomposed as
\begin{equ}[e:decompBM]
\Phi_s = \tilde \Phi_s + \frac{sy + (t-s)x}t\;,
\end{equ}
where $\tilde \Phi$ is a Brownian bridge. We deduce from this and the fact that $\Phi_t$ is a Gaussian
random variable with mean $x$ and variance $t$, that $P_t^V$ is an integral
operator with kernel given by 
\begin{equ}[e:kernel]
P_t^V(x,y) = P_t(x,y) \,\E \exp \biggl(-\int_0^t V\Bigl(\tilde \Phi_s+ \frac{sy + (t-s)x}t \Bigr)\,ds\biggr)\;,
\end{equ}
where $P_t$ denotes the usual heat kernel.
One nice feature of this representation is that, since we know that the operator $P_t$ with kernel
$P_t(x,y) \propto \exp(-\frac{|x-y|^2}{2t})$ is selfadjoint, we immediately see
from \eqref{e:kernel} that $P_t^V$ is also selfadjoint since the Brownian bridge measure is 
invariant under the change of variables $s \mapsto t-s$ which exchanges the roles of $x$ and $y$.

\subsection{Half-densities}
\label{sec:halfdens}

In order to study quantum field theories, one would like to generalise constructions of the type
\eqref{e:kernel} to infinite-dimensional situations. This however makes it somewhat unclear how expressions
like \eqref{e:kernel} in which $x$ and $y$ play symmetric roles can be extended to such a situation.
Indeed, one could ``naïvely'' think that the infinite-dimensional analogue of the ``heat kernel'' would 
be the Markov transition kernel of some infinite-dimensional Markov process (the analogue of the Brownian motion
in the previous discussion). Since however there isn't any analogue of Lebesgue measure in infinite dimensions,
a Markov kernel cannot be represented by a function of two variables there. Instead, it is naturally a function
in its first argument, but a measure in its second argument, thus breaking the nice symmetry between $x$ and $y$.

One solution to this problem is the use of so-called half-densities. These are based on the simple
observation that, given two positive Radon measures $\mu_1$, $\mu_2$ on a (Polish) space $\CX$, we
can canonically define a measure $\sqrt{\mu_1\mu_2}$ by
\begin{equ}
\sqrt{\mu_1\mu_2}(A) = \int_A \sqrt{\frac{d\mu_1}{d\nu}(x)\frac{d\mu_2}{d\nu}(x)}\,\nu(dx)\;,
\end{equ}
where $\nu$ is any positive Radon measure such that $\mu_i \ll \nu$ (for example $\nu = \mu_1 + \mu_2$)\footnote{Here and below we write $\mu \ll \nu$ to mean that $\mu$ is absolutely continuous with respect to $\nu$.}.
It is not difficult to prove (and already apparent from the notation) that this expression is
indeed independent of the choice of reference measure $\nu$. 

Given a measure class $[\nu]$ on $\CX$, we then have a Hilbert space $\CH_{[\nu]}$ which is formally 
nothing but $L^2(\CX,\nu)$, but we think of its elements as expressions of the type
$f \sqrt \nu$ with $f \in L^2(\CX,\nu)$, endowed with the scalar product
\begin{equ}
\scal{f \sqrt \nu , \tilde f \sqrt{\tilde \nu}} = \int_\CX f(x) \tilde f(x)\,\sqrt{\nu\tilde\nu}(dx) \;,
\end{equ}
as well as the natural equivalence relation postulating that $f_1 \sqrt{\nu_1} \sim f_2 \sqrt {\nu_2}$
if and only if there exists a measure $\mu$ with $\nu_i \ll \mu$ such that $f_1 \sqrt{\frac{d\nu_1}{d\mu}}
= f_2 \sqrt{\frac{d\nu_2}{d\mu}}$. In this way, $\CH_{[\nu]}$ is defined in a canonical way that only
depends on the measure class $[\nu]$ and not on its particular choice of representative $\nu$.

\begin{remark}\label{rem:tensor}
It is not difficult to see that these spaces have the same tensorial property as the usual  $L^2$
spaces, namely, given measure spaces $(\CX,\nu)$ and $(\CY,\mu)$ as above, we have
$\CH_{[\mu \otimes \nu]} \simeq \CH_{[\mu]} \otimes \CH_{[\nu]}$, with $\otimes$ denoting the tensor product 
of Hilbert spaces (which is again a Hilbert space). Here and below, we use the symbol
$\simeq$ to denote that two objects are not just isomorphic but \emph{canonically} isomorphic, so can be identified for all 
intents and purposes.
\end{remark}

We now remark that \eqref{e:kernel} can be written in a natural way as a half-density in the following way. Consider the $\sigma$-finite measure $\hat \P_t$ on $\CC([0,2t],\R^d)$
given by $\hat \P_t (d\Phi) = \int_{\R^d} \bigl(\tau^{(c)}_* \P^{(0)}_{2t}\bigr)(d\Phi)\,dc$, where
$\P^{(0)}_{2t}$ denotes the law of a Brownian bridge on $[0,2t]$ (which is therefore a probability measure on $\CC([0,2t],\R^d)$) and 
$(\tau^{(c)} \Phi)(s) = \Phi(s) + c$. We can then consider the measure $\hat \P^V_t$
given by 
\begin{equ}[e:exprPVhat]
\hat \P^V_t(d\Phi) = \exp \Bigl(-\int_0^{2t} V\bigl(\Phi_s\bigr)\,ds\Bigr)\,\hat \P_t(d\Phi)\;.
\end{equ}
If $V$ grows sufficiently fast at infinity (any strictly positive power
of its argument will do), then one can show that
the measure $\hat \P^V_t$ is finite.
Consider furthermore the map 
$\pi \colon \CC([0,2t],\R^d) \to \R^d \times \R^d$ given by 
\begin{equ}
\pi \Phi = (\Phi_0, \Phi_t)\;.
\end{equ}
We then claim the following.

\begin{proposition}
With $P_t^V$ as in \eqref{e:kernel} one has $P_t^V\sqrt{dx\,dy} = \sqrt{(4\pi t)^{-d/2}\pi_* \hat \P^V_t}$.
\end{proposition}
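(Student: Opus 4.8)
The plan is to reduce the claimed identity of half-densities to an elementary identity of $\sigma$-finite measures on $\R^d\times\R^d$. Both sides are of the form $h(x,y)\,\sqrt{dx\,dy}$ with $h\ge 0$: on the left because the kernel $P_t^V$ in \eqref{e:kernel} is the product of the nonnegative heat kernel $P_t$ with the expectation of a positive quantity, and on the right because, as the computation below shows, $\pi_*\hat\P^V_t$ is absolutely continuous with respect to $dx\,dy$. By the construction of the spaces $\CH_{[\nu]}$ recalled in Section~\ref{sec:halfdens} — in particular the identity $\sqrt{\rho\,\nu}=\sqrt\rho\,\sqrt\nu$ for a nonnegative density $\rho$ — two such half-densities coincide if and only if the functions $h$ coincide, equivalently if and only if the measures $h^2\,dx\,dy$ agree. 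It therefore suffices to show that $\pi_*\hat\P^V_t$ has density $(4\pi t)^{d/2}P_t^V(x,y)^2$ with respect to Lebesgue measure.

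The key structural input is a disintegration of $\hat\P_t$ over the pair $(\Phi_0,\Phi_t)$. By its very definition $\hat\P_t$ is the superposition, over $c\in\R^d$ against Lebesgue measure, of the laws of $c+\tilde\Phi$ with $\tilde\Phi$ a standard Brownian bridge on $[0,2t]$ (so $\tilde\Phi_0=\tilde\Phi_{2t}=0$); fixing the shift $c=x$, the value $\Phi_t=x+\tilde\Phi_t$ is Gaussian with mean $x$ and covariance $\tfrac t2\,\mathrm{Id}$ (the variance of the bridge at its midpoint), and by the Markov property of the Brownian bridge, conditionally on $\Phi_t=y$ the two pieces $\Phi|_{[0,t]}$ and $\Phi|_{[t,2t]}$ are independent, the former being a Brownian bridge from $x$ to $y$ and the latter (the bridge living on $[t,2t]$) one from $y$ to $x$. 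In other words,
\begin{equ}
\hat\P_t(d\Phi)=\int_{\R^d\times\R^d}\frac{e^{-|x-y|^2/t}}{(\pi t)^{d/2}}\;\P^{x\to y}\bigl(d\Phi|_{[0,t]}\bigr)\,\P^{y\to x}\bigl(d\Phi|_{[t,2t]}\bigr)\;dx\,dy\;,
\end{equ}
where $\P^{a\to b}$ is the law of a Brownian bridge from $a$ to $b$ over a time interval of length $t$. In particular, taking $V\equiv 0$ and comparing with $P_t(x,y)=(2\pi t)^{-d/2}e^{-|x-y|^2/(2t)}$ one checks directly that $\pi_*\hat\P_t$ has density $(\pi t)^{-d/2}e^{-|x-y|^2/t}=(4\pi t)^{d/2}P_t(x,y)^2$, which already settles the proposition when $V\equiv 0$.

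To include the potential I would combine \eqref{e:exprPVhat} with this disintegration. Since for fixed endpoints the exponential factorises as $\exp(-\int_0^{2t}V(\Phi_s)\,ds)=\exp(-\int_0^tV(\Phi_s)\,ds)\cdot\exp(-\int_t^{2t}V(\Phi_s)\,ds)$, a product of functions of the two conditionally independent pieces, the density of $\pi_*\hat\P^V_t$ with respect to $dx\,dy$ is $(\pi t)^{-d/2}e^{-|x-y|^2/t}\,A(x,y)\,A(y,x)$, where $A(x,y)$ is the expectation of $\exp(-\int_0^tV(\Phi_s)\,ds)$ over a Brownian bridge from $x$ to $y$ on $[0,t]$ (the factor $A(y,x)$ coming from the piece on $[t,2t]$, reparametrised to $[0,t]$). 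Writing this bridge in the affine form \eqref{e:decompBM} identifies $A(x,y)$ with exactly the expectation appearing in \eqref{e:kernel}, so $P_t(x,y)A(x,y)=P_t^V(x,y)$; and the invariance of the standard Brownian bridge under the time reversal $s\mapsto t-s$, which interchanges the affine terms $\tfrac{sx+(t-s)y}{t}$ and $\tfrac{sy+(t-s)x}{t}$, gives $A(y,x)=A(x,y)$. Hence the density of $\pi_*\hat\P^V_t$ equals $(\pi t)^{-d/2}e^{-|x-y|^2/t}A(x,y)^2=(4\pi t)^{d/2}\bigl(P_t(x,y)A(x,y)\bigr)^2=(4\pi t)^{d/2}P_t^V(x,y)^2$, and taking square roots of half-densities gives the claim. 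That all the measures involved are genuinely finite — so that the disintegration is unambiguous — follows from the finiteness of $\hat\P^V_t$ noted after \eqref{e:exprPVhat}.

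The point requiring the most care is the disintegration itself: one has to justify that conditioning the $\sigma$-finite measure $\hat\P_t$ on $(\Phi_0,\Phi_t)=(x,y)$ really yields, for Lebesgue-almost every $(x,y)$, the product of two Brownian bridges displayed above, which rests on the Markov property of the Brownian bridge together with a short explicit Gaussian computation of its midpoint law. This step is also what makes transparent the seemingly asymmetric but in fact symmetric role played by $x$ and $y$ in \eqref{e:kernel}: it is nothing but the time-reversal symmetry $s\mapsto 2t-s$ of $\hat\P_t$, under which the map $\pi$ is equivariant up to swapping its two coordinates. Everything else is routine.
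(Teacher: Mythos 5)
Your proof is correct and follows essentially the same route as the paper: both rest on disintegrating $\hat\P_t$ over $(\Phi_0,\Phi_t)$ into the Gaussian midpoint density $(\pi t)^{-d/2}e^{-|x-y|^2/t}$ times two independent bridge laws on the two halves of $[0,2t]$, and then factoring the potential accordingly. The only cosmetic difference is that the paper packages this disintegration as a pushforward under an explicit bijection $\Xi$, centring both halves to standard bridges via the piecewise-affine interpolation $\Phi_{x,y}$, whereas you condition via the Markov property of the bridge and invoke time reversal to identify $A(y,x)=A(x,y)$ --- a symmetry the paper uses silently when it writes the product of the two bridge expectations as a square.
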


\begin{proof}
given $x,y \in \R^d$, write $\Phi_{x,y} \colon [0,2t] \to \R^d$ for the 
function that is affine on $[0,t]$ and $[t,2t]$ and such that 
$\Phi_{x,y} (0) = \Phi_{x,y} (2t) = x$ and $\Phi_{x,y}(t) = y$. 
We then consider the bijection
\begin{equs}
\Xi \colon \CC_0([0,t],\R^d)\times \R^d \times \R^d \times \CC_0([0,t],\R^d) &\to \CC([0,2t],\R^d)\\
(\Phi,x,y,\hat \Phi) &\mapsto \Phi_{x,y} + (\Phi \sqcup \hat \Phi)\;,
\end{equs}
where the concatenation $\Phi \sqcup \hat \Phi$ equals $\Phi$ on $[0,t]$
and $\hat \Phi(\cdot - t)$ on $[t,2t]$. Here, $\CC_0([0,t],\R^d)$ denotes
the space of continuous functions that vanish at each end of the interval.

With this notation at hand, it follows from a decomposition analogous to 
\eqref{e:decompBM} and the fact that the variance of $\Phi_t$ under $\P_{2t}^{(0)}$ is
$t/2$, that  
\begin{equ}
(\Xi^{-1})_* \hat \P_t = (\pi t)^{-d/2} e^{-\frac{|x-y|^2}{t}} \P^{(0)}_t \otimes dx \otimes dy \otimes \P^{(0)}_t\;.
\end{equ}
It follows immediately that
\begin{equs}
\frac{d\pi_* \hat \P_t^V}{dx\,dy} &= \frac{e^{-\frac{|x-y|^2}{t}}}{(\pi t)^{d/2} }\int \exp \Bigl(-\int_0^{2t} V\bigl(\Xi(\Phi,x,y,\bar \Phi)_s\bigr)\,ds\Bigr) \P_t^{(0)}(d\Phi)  \P_t^{(0)}(d\bar \Phi)\\
&= \frac{e^{-\frac{|x-y|^2}{t}}}{(\pi t)^{d/2} }\int \exp \Bigl(-\int_0^{t} V\bigl(\Phi_{x,y}(s) + \Phi_s\bigr)\,ds\Bigr) \P^{(0)}_t(d\Phi)\\
&\qquad\times  \int \exp \Bigl(-\int_0^{t} V\bigl(\Phi_{x,y}(s+t) + \bar \Phi_s\bigr)\,ds\Bigr) \P^{(0)}_t(d\bar \Phi) \\
&= \frac{\bigl(P_t(x,y)\bigr)^2}{(4\pi t)^{-d/2}} \biggl(\int \exp \Bigl(-\int_0^{t} V\bigl(\Phi_{x,y}(s) +  \Phi_s\bigr)\,ds\Bigr) \P^{(0)}_t(d \Phi)\biggr)^2\;,
\end{equs}
which is the desired identity.
\end{proof}

This is still not entirely satisfactory since it contains this strange factor $(4\pi t)^{-d/4}$ which appears to come out of nowhere. In fact, this can be understood by realising
two things. First, \eqref{e:exprPVhat} isn't very natural since it normalises $\hat\P_t$ to be a
probability measure while the exponential weight is unnormalised. In particular, we could
add a constant multiple  $\lambda$ of the identity to the inverse of the covariance 
operator of $\hat \P_t$ and simultaneously subtract $\frac{\lambda}2 \Phi^2$ to $V$. 
This would not change the formal expression \eqref{e:pathInt} which we took as a basis
for our intuition, but it would change the normalisation of the measure \eqref{e:exprPVhat}.
In view of \eqref{e:densityGaussian}, one would expect to be able to remedy this
if we were to multiply $\hat \P^V_t$ by $\sqrt{\det 2\pi C_t}$ for $C_t$ the covariance operator
of $\hat \P_t$. A discussion similar to that of Section~\ref{sec:pathInt} 
shows that $C_t^{-1} = -\d_t^2$, the second derivative 
operator on $L^2([0,2t],\R^d)$ with periodic boundary conditions.

Second, the formal expression ``$d\Phi$'' should represent ``Lebesgue measure'' in some
ambient Hilbert space $\CE_t$ of functions / distributions on $[0,2t]$. 
For this to concatenate in the ``right'' way, one should
have the decomposition $\CE_{t+s} = \CE_t \oplus \CE_s$, which is indeed the case
if one takes $\CE_t = L^2([0,2t])$. This however shows that, if we distribute $c$ according
to Lebesgue measure, then one should define $\tau^{(c)}$ appearing in the definition
of $\hat \P_t$ as $\tau^{(c)}\Phi = \Phi + \frac{c}{\sqrt{2t}}$, since it is 
$1/\sqrt{2t}$ which is a unit vector in $L^2([0,2t])$. 

Combining these two remarks and 
performing the change of variables $c \mapsto \sqrt{2t} c$ in the integral over $c$, 
this discussion shows that rather than considering $\hat \P^V_t$, it would be more
natural to consider
\begin{equ}[e:betterV]
\P^V_t =  \sqrt{\frac{2t}{\det (- \d_t^2/2\pi)}}\hat \P^V_t\;.
\end{equ}

\subsection{Determinants of differential operators}

Of course, taking the determinant of $-\d_t^2$ appears nonsensical at first sight.
However, this can be made sense of in the following way. For a symmetric 
strictly positive definite linear map
$A$  on  $\R^N$, we can define
\begin{equ}
\zeta_A(s) = \sum_{\lambda \in \sigma(A)} \lambda^{-s}\;.
\end{equ}
(We count eigenvalues with multiplicities here and below.)
For $\Re s$ large enough, this expression makes sense in much greater generality, 
for example (by Weyl's asymptotic) if $A$ is an elliptic selfadjoint differential 
operator on a compact manifold. 
Note then that
\begin{equ}
\zeta_A'(s) = -\sum_{\lambda \in \sigma(A)} \lambda^{-s} \log \lambda\;, 
\end{equ}
so that 
\begin{equ}[e:zetaA]
e^{-\zeta_A'(0)} = \exp \Bigl(\sum_{\lambda \in \sigma(A)} \log \lambda\Bigr)
= \prod \sigma(A) = \det A\;.
\end{equ}
This suggests that one \textit{defines} the $\zeta$-regularised determinant
of a self-adjoint operator $A$ by setting
\begin{equ}
\det_\zeta A = e^{-\zeta_A'(0)}\;,
\end{equ}
where we used the analytic continuation of $\zeta_A$ at the origin (provided that 
it exists).

\begin{remark}
Note that \eqref{e:zetaA} \textit{fails} when $A$ has some vanishing eigenvalues
since in that case $\det A$ vanishes while $\det_\zeta A$ equals
the determinant of the restriction of $A$ to its range.
\end{remark}

Take for example $A = -\d_t^2$ on the circle of radius $1$. In this case,
one has $\lambda_n = n^2$ with multiplicity $2$ and one gets
\begin{equ}
\zeta_A(s) = 2\zeta(2s)\;,
\end{equ}
where $\zeta$ is the usual Riemann zeta function. Since $\zeta'(0) = -\f12 \log(2\pi)$,
it follows that one has $\det_\zeta A = (2\pi)^2$. We now make use of the following
simple facts.

\begin{lemma}\label{lem:det}
One has $\det_\zeta(A\oplus B) = (\det_\zeta A)(\det_\zeta B)$ and,
for a scalar $\lambda$, one has $\det_\zeta(\lambda A) = \lambda^{\zeta_A(0)} \det A$.
\end{lemma}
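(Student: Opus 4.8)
The plan is to prove both identities directly from the definition $\det_\zeta A = e^{-\zeta_A'(0)}$ by manipulating the associated zeta functions, taking care that all analytic continuations are the ones being asserted to exist.

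\textbf{Additivity.} First I would observe that the spectrum of $A \oplus B$ is the disjoint union (with multiplicities) of $\sigma(A)$ and $\sigma(B)$, so that for $\Re s$ large enough one has the termwise identity $\zeta_{A\oplus B}(s) = \zeta_A(s) + \zeta_B(s)$. Since the right-hand side admits a meromorphic continuation to a neighbourhood of $0$ by hypothesis (regularity of $\zeta_A$ and $\zeta_B$ at the origin), so does the left-hand side, and the continuations agree by uniqueness of analytic continuation. Differentiating at $s=0$ gives $\zeta_{A\oplus B}'(0) = \zeta_A'(0) + \zeta_B'(0)$, and exponentiating the negative yields $\det_\zeta(A\oplus B) = (\det_\zeta A)(\det_\zeta B)$.

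\textbf{Scaling.} For the second identity, note that $\sigma(\lambda A) = \{\lambda \mu : \mu \in \sigma(A)\}$ with the same multiplicities, so for $\Re s$ large,
\begin{equ}
\zeta_{\lambda A}(s) = \sum_{\mu \in \sigma(A)} (\lambda\mu)^{-s} = \lambda^{-s}\sum_{\mu \in \sigma(A)} \mu^{-s} = \lambda^{-s}\zeta_A(s)\;.
\end{equ}
Again the right-hand side continues meromorphically near $0$, hence so does $\zeta_{\lambda A}$, and the two continuations coincide. Now differentiate: $\zeta_{\lambda A}'(s) = -(\log\lambda)\,\lambda^{-s}\zeta_A(s) + \lambda^{-s}\zeta_A'(s)$, so evaluating at $s=0$ gives $\zeta_{\lambda A}'(0) = -(\log\lambda)\,\zeta_A(0) + \zeta_A'(0)$. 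Therefore
\begin{equ}
\det_\zeta(\lambda A) = e^{-\zeta_{\lambda A}'(0)} = e^{(\log\lambda)\zeta_A(0)} e^{-\zeta_A'(0)} = \lambda^{\zeta_A(0)}\det_\zeta A\;,
\end{equ}
which is the claimed formula (reading $\det A$ in the statement as $\det_\zeta A$, consistently with the surrounding discussion where $A = -\d_t^2$ has a nontrivial kernel).

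\textbf{Main obstacle.} The only genuinely delicate point is bookkeeping around the analytic continuation: one must check that the meromorphic continuations are holomorphic at $s = 0$ so that ``$\zeta'(0)$'' is well defined, and in the scaling case one should be slightly careful that multiplying by the \emph{entire} function $\lambda^{-s}$ does not introduce or cancel a pole at the origin — it does not, since $\lambda^{-s}$ is nonvanishing and holomorphic everywhere. For $A = -\d_t^2$ on an interval or circle this regularity at $0$ is classical (it follows from the Weyl-type asymptotics alluded to in the text), so no new estimates are needed; the argument is purely formal once that input is granted. If $\lambda$ is allowed to be complex one would also fix a branch of $\log\lambda$, but for the application $\lambda > 0$ and this issue disappears.
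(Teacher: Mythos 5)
Your proof is correct and takes exactly the same route as the paper, which simply notes the identities $\zeta_{A\oplus B}(s) = \zeta_A(s) + \zeta_B(s)$ and $\zeta_{\lambda A}(s) = \lambda^{-s}\zeta_A(s)$ and differentiates at $s=0$; you have merely filled in the (routine) details about spectra and analytic continuation. Your reading of $\det A$ on the right-hand side as $\det_\zeta A$ is also the intended one.
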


\begin{proof}
One easily verifies that $\zeta_{A\oplus B}(s) = \zeta_A(s) + \zeta_B(s)$
whence the first claim follows. The second claim similarly follows from the fact that
$\zeta_{\lambda A}(s) = \lambda^{-s} \zeta_A(s)$.
\end{proof}

\begin{remark}
This shows that $\zeta_A(0)$ plays the role of an effective ``dimension'' 
for the range of $A$ which does not however need to be positive in general! Recall in 
particular that $\zeta(0) = -\f12$.
\end{remark}

We now remark that if $B$ is the operator $-(2\pi)^{-1}\d_t^2$ on $[0,2t]$ with 
periodic boundary conditions then, suitably interpreted, one has $B \sim \pi/(2t^2) A$,
where $\sim$ denotes unitary equivalence. It then follows from 
the second part of Lemma~\ref{lem:det} that
\begin{equ}
\det_\zeta B = \Bigl(\frac\pi {2t^2}\Bigr)^{\zeta_A(0)} \det_\zeta A
= (2\pi)^2 \Bigl(\frac\pi {2t^2}\Bigr)^{2\zeta(0)} = 8\pi t^2\;.
\end{equ}
In view of \eqref{e:betterV} and since $2t/\det_\zeta B = (4\pi t)^{-1}$, this suggests that a much more natural definition 
than \eqref{e:exprPVhat} is in fact given by
\begin{equ}[e:exprPV]
\P_t = (4\pi t)^{-d/2}\hat \P_t\;,\qquad 
\P^V_t(d\Phi) = \exp \Bigl(-\int_0^{2t} V\bigl(\Phi_s\bigr)\,ds\Bigr)\,\P_t(d\Phi)\;,
\end{equ}
which leads to the following statement.

\begin{corollary}\label{cor:SG}
With $P_t^V$ as in \eqref{e:kernel} one has $P_t^V\sqrt{dx\,dy} = \sqrt{\pi_*  \P^V_t}$.
\end{corollary}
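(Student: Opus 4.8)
The plan is to observe that Corollary~\ref{cor:SG} is, after the Proposition preceding it, essentially a bookkeeping statement about where the normalising scalar $(4\pi t)^{-d/2}$ sits. Recall from \eqref{e:exprPV} that $\P^V_t = (4\pi t)^{-d/2}\hat\P^V_t$, with $\hat\P^V_t$ the measure of \eqref{e:exprPVhat}, so all that remains is to move that constant factor through the pushforward and the half-density map.

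First I would record the elementary fact that, for a positive scalar $\lambda$ and a positive Radon measure $\mu$ on a Polish space $\CX$, one has $\sqrt{\lambda\mu} = \sqrt{\lambda}\,\sqrt{\mu}$ as half-densities: taking the reference measure $\nu = \mu$ in the definition recalled in Section~\ref{sec:halfdens}, the Radon--Nikodym derivative of $\lambda\mu$ is the constant $\lambda$, hence its square root is $\sqrt\lambda$, and the result is independent of the choice of $\nu$. Since $\pi_*$ is linear in the measure, this gives $\pi_*\P^V_t = (4\pi t)^{-d/2}\pi_*\hat\P^V_t$ and therefore $\sqrt{\pi_*\P^V_t} = (4\pi t)^{-d/4}\sqrt{\pi_*\hat\P^V_t} = \sqrt{(4\pi t)^{-d/2}\pi_*\hat\P^V_t}$. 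By the Proposition, this last half-density equals $P_t^V\sqrt{dx\,dy}$, which is exactly the claim; so once the scalar manipulation is in place, the corollary is a one-line consequence.

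If one wishes to also make contact with the motivating formula \eqref{e:betterV}, the remaining (purely computational) task is to check that the $\zeta$-regularised prefactor there really is $(4\pi t)^{-d/2}$: using Lemma~\ref{lem:det} together with $\zeta(0) = -\f12$ and $\zeta'(0) = -\f12\log(2\pi)$ one gets $\det_\zeta\bigl(-(2\pi)^{-1}\d_t^2\bigr) = 8\pi t^2$ on $L^2([0,2t])$, then the $d$-th power on $L^2([0,2t],\R^d)$ since $-\d_t^2$ acts diagonally, and finally $\bigl((2t)^{d}/(8\pi t^2)^{d}\bigr)^{1/2} = (4\pi t)^{-d/2}$, where the $(2t)^{d}$ in the numerator comes from the Jacobian $(2t)^{d/2}$ of the change of variables $c\mapsto\sqrt{2t}\,c$ in the $c$-integral defining $\hat\P_t$, squared under the half-density. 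There is no genuine obstacle here; the only point demanding care is precisely this arithmetic of powers — that the operator determinant on the $\R^d$-valued space is the $d$-th power of the scalar one, and that passing from the measure $\pi_*\P^V_t$ to the half-density $\sqrt{\pi_*\P^V_t}$ is what converts $(4\pi t)^{-d/2}$ into the $(4\pi t)^{-d/4}$ visible in the statement of the Proposition.
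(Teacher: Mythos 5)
Your argument is correct and coincides with the paper's: the corollary is stated without proof precisely because, as you observe, it is immediate from the preceding Proposition once one notes that $\P^V_t = (4\pi t)^{-d/2}\hat \P^V_t$ by \eqref{e:exprPV} and that a positive scalar passes through $\pi_*$ and the half-density square root via $\sqrt{\lambda\mu}=\sqrt\lambda\,\sqrt\mu$. Your supplementary check of the $\zeta$-regularised prefactor reproduces the computation the paper carries out just before \eqref{e:exprPV} and is arithmetically consistent with it.
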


\subsection{Cobordisms and Segal's axioms}
\label{sec:cobord1}

Let us now rewrite the semigroup property in the context of the 
measures $\P^V_t$ in a way that is suitable for extension to the case of QFTs.
One (admittedly extremely overkill in this situation) way of looking at it is the following
categorical viewpoint. Consider the category $\CC^{(1)}$ whose objects consists of finite
sets whose elements are labelled either ``out'' or ``in'', let's write
these as $A = A_\out \sqcup A_\inc$. Given such an $A$, we also write $\bar A$ 
for the ``opposite'' object, which is such that $\bar A_\out = A_\inc$ and $\bar A_\inc = A_\out$. 
We also define $A\sqcup B$ in the natural way.

Given such an $A$, a ``path above $A$'' is a compact oriented 
one-dimensional Riemannian manifold $\Sigma$ with boundary 
$\d\Sigma = \d\Sigma_\out \sqcup \d\Sigma_\inc$, as well as a bijection $\sigma \colon \d\Sigma \to A$ respecting the orientation. 
Connected components of $\Sigma$ are either isometric to an interval $[0,T]$
with $\sigma(0) \in A_\out$ and $\sigma(T) \in A_\inc$ or to a circle of some positive perimeter.
We include the degenerate case, so $\Sigma$ is allowed to contain intervals (but not circles) of length $0$ which we view as just one ``incoming'' and one ``outgoing'' point.
As a consequence, $A$ only admits paths over it if $|A_\out| = |A_\inc|$. Pictorially, 
we can draw elements of $A_\out$ as \<out> and elements of $A_\inc$ as \<in>.
Here is an example of a path over a set with one outgoing and one incoming point:
\begin{equ}
\begin{tikzpicture}[scale=0.15]
\draw[->] (0,0) node[dot]{}  -- (2,0);
\draw[<-,shorten <=1.5pt] (10,-3) node[dot]{}  -- ++(-2,0);
\draw(2,0) to[out=0,in=180] (8,-3);
\end{tikzpicture}
\end{equ}

A morphism $F \colon A \to B$ is then simply a path above $A\sqcup \bar B$, but
we impose the restriction that degenerate segments of length $0$ necessarily
connect elements of $A$ and $\bar B$ (as opposed to connecting two elements of $A$ say).
Two morphisms $F \colon A \to B$ and $G\colon B \to C$ can be
composed by gluing the two manifolds along the boundaries assigned to $B$.
(The points of $B$ that are incoming for $F$ are outgoing for $G$ and vice-versa,
so this respects orientation.) The condition imposed above guarantees that we
will never create degenerate loops in this process. 
The identity morphism $A\to A$ is the one consisting
solely of zero-length intervals connecting points of $A$ to themselves in $\bar A$.
In the following picture illustrating an example of composition of two morphisms,
we identify points of $A$ and $\bar A$ (say), but draw the arrows 
corresponding to $\bar A$ in red:
\begin{equ}[e:example]
F = \begin{tikzpicture}[scale=0.17,baseline=-0.6cm]
\draw[densely dotted] (0,1) node[above] {\scriptsize$B$} -- (0,-7);
\draw[densely dotted] (-10,1)node[above] {\scriptsize$A$} -- (-10,-7);

\draw[red,<-,shorten <=1.5pt] (0,0) node[dot]{}  -- ++(-2,0);
\draw[red,->] (0,-3) node[dot]{}  -- ++(-2,0);
\draw[red,<-,shorten <=1.5pt] (0,-6) node[dot]{}  -- ++(-2,0);

\draw[->] (0,0) node[dot]{}  -- ++(2,0);
\draw[<-,shorten <=1.5pt] (0,-3) node[dot]{}  -- ++(2,0);
\draw[->] (0,-6) node[dot]{}  -- ++(2,0);

\draw[->] (-10,-3) node[dot]{}  -- ++(2,0);
\draw[<-,red,shorten <=1.5pt] (-10,-3) node[dot]{}  -- ++(-2,0);

\draw(-8,-3) to[out=0,in=180] (-2,-6);
\draw(-2,0) to[out=180,in=180] (-2,-3);
\end{tikzpicture}
\;,\quad G= 
\begin{tikzpicture}[scale=0.17,baseline=-0.6cm]
\draw[densely dotted] (0,1) node[above] {\scriptsize$B$} -- (0,-7);
\draw[densely dotted] (10,1)node[above] {\scriptsize$C$} -- (10,-7);

\draw[red,<-,shorten <=1.5pt] (0,0) node[dot]{}  -- ++(-2,0);
\draw[red,->] (0,-3) node[dot]{}  -- ++(-2,0);
\draw[red,<-,shorten <=1.5pt] (0,-6) node[dot]{}  -- ++(-2,0);

\draw[->] (0,0) node[dot]{}  -- ++(2,0);
\draw[<-,shorten <=1.5pt] (0,-3) node[dot]{}  -- ++(2,0);
\draw[->] (0,-6) node[dot]{}  -- ++(2,0);

\draw[->] (10,-3) node[dot]{}  -- ++(2,0);
\draw[<-,red,shorten <=1.5pt] (10,-3) node[dot]{}  -- ++(-2,0);

\draw(2,-6) to[out=0,in=180] (8,-3);
\draw(2,0) to[out=0,in=0] (2,-3);
\end{tikzpicture}
\;,\quad G\circ F = 
\begin{tikzpicture}[scale=0.17,baseline=-0.6cm]
\draw[densely dotted] (0,1) node[above] {\scriptsize$A$} -- (0,-7);
\draw[densely dotted] (10,1)node[above] {\scriptsize$C$} -- (10,-7);

\draw[red,<-,shorten <=1.5pt] (0,-3) node[dot]{}  -- ++(-2,0);
\draw[->] (0,-3) node[dot]{}  -- ++(2,0);

\draw[->] (10,-3) node[dot]{}  -- ++(2,0);
\draw[<-,red,shorten <=1.5pt] (10,-3) node[dot]{}  -- ++(-2,0);

\draw(2,-3) to[out=0,in=180] (5,-6) to[out=0,in=180] (8,-3);
\draw(5,-1.5) circle (1.5);
\end{tikzpicture}
\end{equ}

If we set $A\otimes B \eqdef A \sqcup B$ and similarly define the product 
$F \otimes G$ of two morphisms as the disjoint union of the corresponding
collections of intervals, then this endows $\CC^{(1)}$ with the
structure of a monoidal category. Note that every morphism can be decomposed
into a finite product of elementary morphisms that can be of one of the 
following five types
\begin{equs}[e:elementary]
 I_t = 
\begin{tikzpicture}[scale=0.17,baseline=-0.1cm]
\draw[densely dotted] (0,1) -- (0,-1);
\draw[densely dotted] (6,1) -- (6,-1);

\draw[red,<-,shorten <=1.5pt] (0,0) node[dot]{}  -- ++(-2,0);
\draw[->] (0,0) node[dot]{}  -- ++(2,0);
\draw[->] (6,0) node[dot]{}  -- ++(2,0);
\draw[<-,red,shorten <=1.5pt] (6,0) node[dot]{}  -- ++(-2,0);

\draw(2,0) to[out=0,in=180] (4,0);
\end{tikzpicture}\;,\quad
I_t^* = 
\begin{tikzpicture}[scale=0.17,baseline=-0.1cm]
\draw[densely dotted] (0,1) -- (0,-1);
\draw[densely dotted] (6,1) -- (6,-1);

\draw[red,->] (0,0) node[dot]{}  -- ++(-2,0);
\draw[<-,shorten <=1.5pt] (0,0) node[dot]{}  -- ++(2,0);
\draw[<-,shorten <=1.5pt] (6,0) node[dot]{}  -- ++(2,0);
\draw[->,red] (6,0) node[dot]{}  -- ++(-2,0);

\draw(2,0) to[out=0,in=180] (4,0);
\end{tikzpicture}
\;,\\[1em]
C_t
= 
\begin{tikzpicture}[scale=0.17,baseline=-0.1cm]
\draw[densely dotted] (0,2) -- (0,-2);
\draw[densely dotted] (-4,2) -- (-4,-2);

\draw[red,<-,shorten <=1.5pt] (0,1) node[dot]{}  -- ++(-2,0);
\draw[red,->] (0,-1) node[dot]{}  -- ++(-2,0);

\draw[->] (0,1) node[dot]{}  -- ++(2,0);
\draw[<-,shorten <=1.5pt] (0,-1) node[dot]{}  -- ++(2,0);

\draw(-2,1) to[out=180,in=180] (-2,-1);
\end{tikzpicture}
\;,\quad 
C_t^*
= 
\begin{tikzpicture}[scale=0.17,baseline=-0.1cm]
\draw[densely dotted] (0,2) -- (0,-2);
\draw[densely dotted] (4,2) -- (4,-2);

\draw[red,<-,shorten <=1.5pt] (0,-1) node[dot]{}  -- ++(-2,0);
\draw[red,->] (0,1) node[dot]{}  -- ++(-2,0);

\draw[->] (0,-1) node[dot]{}  -- ++(2,0);
\draw[<-,shorten <=1.5pt] (0,1) node[dot]{}  -- ++(2,0);

\draw(2,1) to[out=0,in=0] (2,-1);
\end{tikzpicture}
\;,\quad 
O_t
= 
\begin{tikzpicture}[scale=0.17,baseline=-0.1cm]
\draw[densely dotted] (0,2) -- (0,-2);
\draw[densely dotted] (4,2) -- (4,-2);
\draw(2,0) circle (1.5);
\end{tikzpicture}\;,
\end{equs}
where the subscript $t \ge 0$ denotes the length of the single
interval (or circle) appearing in the corresponding morphism.
For example, the morphism $F$ appearing in \eqref{e:example}
is of the form $F =  I_t \otimes C_s$ for some $t \ge 0$ and $s > 0$.

Write now $\Hil$ for the category whose objets are (real) Hilbert spaces 
with morphisms consisting of bounded linear operators. This category
is also symmetric monoidal with $\otimes$ denoting the usual tensor products of
Hilbert spaces and linear operators. 

Given any Hilbert space $\CH$ and any semigroup $P_t$ of Hilbert--Schmidt
operators on $\CH$ (except of course at $t=0$ since the identity is not Hilbert--Schmidt),
we then obtain a monoidal functor
$\CF \colon \CC^{(1)} \to \Hil$ in the following way. 
Regarding objects, we set $\CF(A) = \CH^{\otimes A_\out} \otimes (\CH^*)^{\otimes A_\inc}$ (with the usual convention that $\CH^{\otimes \emptyset} = \R$). Of 
course, one has $\CH^* \simeq \CH$ since these are Hilbert spaces, but distinguishing the 
two is natural here since a Hilbert--Schmidt operator $P \colon \CH \to \CH$ can
 be viewed as an element $\bar P \in \CH^* \otimes \CH$. 
We then set $\CF(I_t) = P_t$, $\CF(C_t) = \bar P_t$, $\CF(O_t) = \tr P_t$, 
as well as $\CF(I_t^*) = \CF(I_t)^*$ and $\CF(C_t^*) = \CF(C_t)^*$.
We extend this to all morphisms by imposing that 
$\CF(F \otimes G) = \CF(F) \otimes \CF(G)$ and that it behaves ``correctly'' under
permutation of factors. It is then a straightforward exercise to show that 
the semigroup property is equivalent to the fact that $\CF$ is indeed a functor,
namely that $\CF(F\circ G) = \CF(F) \circ \CF(G)$.

In the particular situation of interest to us, one would choose $\CH$ to be the space
of half-densities
associated to the class of Lebesgue measure on $\R^d$ as in Section~\ref{sec:halfdens}.
Recall that if $V$ grows fast enough at infinity, then the measures $\P^V_t$
are finite, so that $\bar P_t = \sqrt{\pi_*\P^V_t} \in \CH\otimes \CH \simeq 
\CH^*\otimes \CH$. Furthermore, as a consequence of Corollary~\ref{cor:SG}, these 
operators do indeed form a semigroup, so that the above construction yields
a monoidal functor $\CF^V \colon \CC^{(1)} \to \Hil$.

\section{The free field as a conformal QFT}

We now aim to perform a similar construction for a quantum field theory, as opposed to
just a single particle. The approach described here is essentially the one
proposed by Segal in \cite{SegalOriginal} and is commonly referred to as ``Segal's approach''
to CFT's. This is also the version that was implemented in 
\cite{Bootstrap} in the context of Liouville CFT (which we haven't introduced yet!).

Instead of starting straight away with Liouville CFT, we start by interpreting the 
free field as a CFT. The problem with this is that, just as in the zero-dimensional
case described in the previous section, the free field has a zero mode that isn't 
pinned down, so that it is naturally described by a measure which is 
merely $\sigma$-finite as opposed to be finite. This would kick us out of our
framework, just like the previous construction fails when $V=0$ since the
measures $\P_t^V$ don't have finite mass in that case.
In order to circumvent this problem, our approach in this section will be to 
simply \textit{assume} that we 
are given a functional $V$ with suitable locality, coercivity, and invariance 
properties. The special case of Liouville theory will then be discussed in the 
last section.

\subsection{A general geometric setting}
\label{sec:cobord2}

In this section, we always work over a compact oriented Riemann surface $\Sigma$ with 
boundary $\d\Sigma$. Recall that $\Sigma$ is a one-dimensional complex manifold, 
i.e.\ such that each point in $\Sigma \setminus \d\Sigma$ admits a neighbourhood 
that is homeomorphic to the unit disc (viewed as a subset of $\C$) and such that 
transition maps are all holomorphic. 

This determines a collection of Riemannian metrics $g$ on $\Sigma$ which are those
that are proportional to the identity when viewed as metrics on $\C \simeq \R^2$ in any 
of the abovementioned charts. These are in particular such that,
for any two admissible metrics $g$, $\bar g$ there exists a smooth map $\phi \colon \Sigma \to \R$ such that $\bar g = e^{2\phi} g$. We furthermore assume that the boundary $\d\Sigma$
is oriented  and such that, for each connected component $\d\Sigma_j$ of $\d\Sigma$,
there exists a neighbourhood $\CN_j$ of $\d\Sigma_j$ and a holomorphic 
map $\psi_j \colon \CN_j \to \C$ such that $\psi(\d\Sigma_j)$ is the unit circle
(with the orientation of $\d\Sigma_j$ corresponding to the usual counterclockwise 
orientation) and there exists $\delta > 0$ such that either
$\psi_j(\CN_j) = \{z\,:\, |z| \in (e^{-\delta},1]\}$ or $\psi_j(\CN_j) = \{z\,:\, |z| \in [1,e^\delta)\}$. We write $\d\Sigma_\out$ for the union of the connected components such that
the former holds and $\d\Sigma_\inc$ for those such that it is the latter. 

We will always fix a parametrisation of $\d\Sigma$ and a metric $g$ on $\Sigma$ such
that the image of the parametrisation of $\d\Sigma_j$ under $\psi_j$ as above induces the
usual parametrisation $\theta \mapsto e^{i\theta}$ of the unit circle and such 
that the metric $g$ on each of the
$\CN_j$ agrees with the pullback of the flat metric $\frac{|dz|^2}{|z|^2}$ on $\C^*$ 
under $\psi_j$\footnote{The reason for using this metric and not simply the Euclidean 
one is that it is invariant under the reflection $z \mapsto \bar z^{-1}$. It is clearly flat
since it makes $\C^*$ isometric to $S^1 \times \R$ endowed with the Euclidean metric, by taking
logarithms.}. This in particular defines a notion of 
``outward normal derivative'' for smooth functions $F \colon \Sigma \to \R$, namely
$\d_\nu F(z) = {d\over dt} F(\psi_j^{-1}(\psi_j(z)e^t))\big\vert_{t=0}$
for $z \in \d\Sigma_j \cap \d\Sigma_\out$ and $\d_\nu F(z) = {d\over dt} F(\psi_j^{-1}(\psi_j(z)e^{-t}))\big\vert_{t=0}$
for $z \in \d\Sigma_j \cap \d\Sigma_\inc$.
(Note that, for some small $\delta > 0$, the function 
appearing on the right of this definition is defined on $(-\delta,0]$.)

Similarly to Section~\ref{sec:cobord1}, we can now define a category
$\CC^{(2)}$ of cobordisms in the following way. An object $A \in \Ob\CC^{(2)}$ is 
a finite (possibly empty) collection of circles, viewed as oriented one-dimensional Riemannian
manifolds of length $2\pi$. A morphism $\Sigma\colon A \to B$ is then 
a compact oriented Riemann surface with boundary as just described, together with
identifications $\Sigma_\inc \simeq A$ and $\Sigma_\out \simeq B$ that respect 
both the orientations and the metrics. We also call such a morphism a ``cobordism''.

The composition $\Sigma=\Sigma_2 \circ \Sigma_1$ of two cobordisms 
$\Sigma_1\colon A_1 \to A_2$ and $\Sigma_2\colon A_2 \to A_3$
is then obtained by simply ``gluing'' $\Sigma_1$ and $\Sigma_2$ along $\Sigma_{1,\out} \simeq A_2 \simeq \Sigma_{2,\inc}$.
As a set, one has $\Sigma = (\Sigma_1 \sqcup \Sigma_2) / A_2$, which is equipped with Riemannian and complex structures
inherited from $\Sigma_1$ and $\Sigma_2$ away from $A_2$. If $\d\Sigma_j$ is any of the connected components of 
$A_2$, then we have charts $\psi_{j,1} \colon \CN_{j,1} \to \C$ and $\psi_{j,2} \colon \CN_{j,2} \to \C$
as above, where the $\CN_{j,k}$ are some neighbourhoods of $\d\Sigma_j$
in $\Sigma_k$. We then simply concatenate them, yielding charts $\psi_{j} \colon \CN_{j} \to \C$,
where $\CN_j = \CN_{j,1} \cup \CN_{j,2}$ is a neighbourhood of $\d\Sigma_j$ in $\Sigma$. 

As before, we allow our cobordisms to contain degenerate components which 
simply consist of an identification of a connected component of their
domain with a connected component of the codomain. Again, these have to respect
the orientation and Riemannian structure. The identity morphism $\id \colon A\to A$
then consists solely of degenerate components and identifies $A$ with itself in the 
canonical way. However, we also have morphisms $\id_\theta \colon S^1 \to S^1$ 
consisting of the identification of $S^1$ with itself, rotated by an angle $\theta$.
In the case $A_2 = S^1$ and $\Sigma_i$ as above, $\Sigma_2 \circ \Sigma_1$
differs from $\Sigma_2 \circ \id_\theta \circ \Sigma_1$ by replacing
$\psi_{j,2}$ by $z\mapsto e^{i\theta} \psi_{j,2}(z)$ in the above construction.
In this way, $\CC^{(2)}$ is again a symmetric monoidal category with 
$\otimes$ being the natural disjoint union.

It also comes with two additional pieces of structure.
First, for any
$\Sigma\colon A \to B$, we have a morphism $\Sigma^* \colon B \to A$ obtained
from $\Sigma$ by reversing the orientation
of $\Sigma$, but not that of $\d\Sigma$, so that in the identification
$\d\Sigma^* = \d\Sigma$, one has $\d\Sigma^*_\out = \d\Sigma_\inc$
and vice-versa. 
More precisely, given a neighbourhood $\CN_j$ of a boundary component $\d\Sigma_j$ of $\d\Sigma_\out$ (say), 
as well as the corresponding chart $\psi_j \colon \CN_j \to \{z\,:\, |z| \in (e^{-\delta},1]\}$ of $\Sigma$,
we set $\psi_j^*(z) = (\overline{ \psi_j(z)})^{-1}$, yielding a chart of $\Sigma^*$. 
Note that since $z = \bar z^{-1}$ when $|z|=1$, this is consistent with the fact that we do not want
to change the way in which $A$ and $B$ are identified with the boundaries of $\Sigma$ and $\Sigma^*$.

\begin{remark} 
It is at this point that is is important to choose the metric $g$ in such a way that
$g(dz) = |dz|^2/|z|^2$ in the chart $\psi_j$ since the invariance of this metric under
$z\mapsto \bar z^{-1}$ guarantees that the metric of $\Sigma^*$ still has the same form 
under the charts $\psi_j^*$.
If we had chosen $g$ such that $g(dz) = |dz|^2$ in charts near the boundary, then this 
would not have been the case and attempting to compose $\Sigma^*$ with another morphism 
would have resulted in a manifold equipped with a metric
whose derivative has a jump discontinuity. 
\end{remark}

The second piece of structure is that, given a cobordism $\Sigma\colon A \to A$,
we obtain a cobordism $\tr \Sigma \colon \emptyset \to \emptyset$ by
discarding the degenerate components and gluing $\d\Sigma_\out$ to $\d\Sigma_\inc$
along $A$ in the same way as above, so that $\d\tr\Sigma = \emptyset$. 

\subsection{The Dirichlet form}

Given a cobordism $\Sigma$, we define a bilinear form $\CE_\Sigma$ on smooth functions
$F\colon \Sigma \to \R$ by
\begin{equ}
\CE_\Sigma(F,G) = \int_\Sigma g_z(\nabla F(z),\nabla F(z))\, \Vol_g(dz)\;.
\end{equ}
We note that this expression is independent of the choice of Riemannian metric $g$
in our class. Indeed, when multiplying $g$ by $e^{2\phi}$, $\Vol_g$ gets
multiplied by $e^{2\phi}$ as well (since the real dimension of $\Sigma$ is $2$) and 
$\nabla F$ gets multiplied by $e^{-2\phi}$,
so that $g_z(\nabla F(z),\nabla F(z))$ gets multiplied by $e^{-2\phi}$ and $\CE_\Sigma(F,G)$ doesn't 
change. This is specific to dimension $2$; in dimension $d$, such a conformal change 
of metric would lead to a factor $e^{(d-2)\phi(z)}$ in the integrand.

In particular, we note that if $F$ and $G$ are supported on a single chart $\CU$
which we identify with the corresponding open subset of $\C$, one has
\begin{equ}[e:exprtDFG]
\CE_\Sigma(F,G) = \frac{1}{2\pi}\int_{\CU} \scal{\nabla F(z),\nabla G(z)}\, |dz|^2\;,
\end{equ}
where $\scal{\cdot,\cdot}$ denotes the usual scalar product on $\R^2$ and $|dz|^2$ denotes 
two-dimensional Lebesgue measure. Here, the prefactor $(2\pi)^{-1}$ is to harmonise conventions with the
literature and to get rid of some additional $2\pi$'s later on.

Since the bilinear form $\CE_\Sigma$ is closable and positive, there exists for each choice of
Riemannian metric $g$ a unique positive semidefinite selfadjoint operator $\Delta_g$ such that, 
for all $F$ smooth and $G \in \CD(\Delta_g)$, one has
\begin{equ}[e:defDeltag]
\CE_\Sigma(F,G) = \scal{F,\Delta_g G}_g\;,
\end{equ}
where $\scal{\cdot,\cdot}_g$ denotes the scalar product in $L^2(\Sigma, \Vol_g)$. Integration by 
parts on \eqref{e:exprtDFG} shows that, in local coordinates and when acting on smooth functions that are
supported away from $\d\Sigma$, one has 
$\Delta_g = -(2\pi)^{-1}e^{-2\phi} \Delta$, where $\phi$ is a smooth function that depends on $g$ and the choice
of coordinates, while $\Delta$ is the usual Laplacian on $\R^2$. Regarding boundary conditions,
it follows from an application of the Stokes theorem that, if $\CU$ is as above but this time $F$ and $G$
aren't necessarily supported in $\CU$,
\begin{equs}
\int_{\CU} &\bigl(\scal{\nabla F(z),\nabla G(z)} + F(z)\Delta G(z)\bigr)\, |dz|^2\\
& = 
\int_{\d \CU} F(z)\, (\d_1 G(z)\,dz_2 - \d_2 G(z)\,dz_1)\;.
\end{equs}
(Here, we think of $z = (z_1,z_2)$ as an element of $\R^2$ in the right-hand side.)
Writing again $\Delta_g$ for the differential operator that acts as described above on \textit{all}
smooth functions on $\Sigma$, one finds that this  expression generalises to
\begin{equ}[e:boundary]
\CE_\Sigma(F,G) = \scal{F,\Delta_g G}_g + \frac{1}{2\pi}\oint_{\d\Sigma} F(z) \d_\nu G(z)\,|dz| \;,
\end{equ}
where $\d_\nu G$ denotes the derivative of $G$ in the outward direction normal to $\d\Sigma$. 
We conclude that, in order for \eqref{e:defDeltag} to hold, the domain of 
$\Delta_g$ must consist of functions $G$ such that $\d_\nu G = 0$ on $\d\Sigma$; 
in other words it is the Laplacian
equipped with Neumann boundary conditions. We will make use of the following classical 
facts:

\begin{proposition}
The operator $\Delta_g$ has compact resolvent and its eigenvalues $\lambda_k$ satisfy 
$\lim_{k \to \infty} {\lambda_k \over k} \in (0,\infty)$. Furthermore, its kernel consists
solely of constant functions. \qed
\end{proposition}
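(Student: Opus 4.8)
The plan is to treat the three claims in turn, all of which are classical consequences of the fact that $\Delta_g$ is the Neumann Laplacian on a compact Riemannian manifold $\Sigma$ with boundary (of real dimension $2$). First I would establish compactness of the resolvent. Since $\CE_\Sigma$ is the Dirichlet form associated to $\Delta_g$, its form domain $\CD(\sqrt{\Delta_g+1})$ is precisely the Sobolev space $H^1(\Sigma,\Vol_g)$ (this is what closability of $\CE_\Sigma$ on smooth functions produces, together with the fact that the metric $g$ is smooth and $\Sigma$ is compact so that all Sobolev norms built from $\CE_\Sigma$ and from charts are equivalent). The Rellich--Kondrachov theorem then gives that the inclusion $H^1(\Sigma)\hookrightarrow L^2(\Sigma,\Vol_g)$ is compact, and since $(\Delta_g+1)^{-1}$ factors through this inclusion it is a compact operator on $L^2$; hence $\Delta_g$ has compact resolvent and a discrete spectrum $0\le\lambda_0\le\lambda_1\le\cdots\to\infty$ accumulating only at infinity.

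Next, the kernel. If $\Delta_g G=0$ then $0=\scal{G,\Delta_g G}_g=\CE_\Sigma(G,G)$, so $\nabla G\equiv 0$ on $\Sigma$ (there is no boundary term to worry about since the boundary integral in \eqref{e:boundary} vanishes when $\Delta_g G=0$, or more simply one uses that $G$ lies in the form domain and $\CE_\Sigma(G,G)=\scal{G,\Delta_g G}_g=0$ directly from \eqref{e:defDeltag}). Since $\Sigma$ is connected, $\nabla G=0$ forces $G$ to be constant; conversely constants are obviously in the kernel, so $\ker\Delta_g$ is exactly the constants and $\lambda_0=0$ is simple. Here one should note we are tacitly assuming $\Sigma$ connected — for a general cobordism $\Sigma$ one replaces ``constant'' by ``locally constant'', i.e.\ the kernel has dimension equal to the number of connected components; the statement as written corresponds to the connected case, which I would flag.

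Finally, the Weyl-type asymptotic $\lambda_k/k\to c\in(0,\infty)$. This is the genuine content and the main obstacle, being more than abstract operator theory. The cleanest route is to invoke Weyl's law for the Neumann Laplacian on a compact surface with (smooth) boundary: $N(\lambda):=\#\{k:\lambda_k\le\lambda\}=\frac{\Vol_g(\Sigma)}{4\pi}\,(2\pi)\,\lambda+o(\lambda)$ as $\lambda\to\infty$ — the extra factor $2\pi$ appearing because of the normalisation $\Delta_g=-(2\pi)^{-1}e^{-2\phi}\Delta$ in \eqref{e:exprtDFG}, so that the principal symbol is $(2\pi)^{-1}|\xi|^2_g$ rather than $|\xi|^2_g$. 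Inverting this asymptotic gives $\lambda_k\sim \frac{2}{\Vol_g(\Sigma)}\,k$, so the limit is $\frac{2}{\Vol_g(\Sigma)}\in(0,\infty)$. For a self-contained argument one can avoid the full Weyl law and get only the two-sided bound $c_1 k\le\lambda_k\le c_2 k$, which already suffices for the stated claim up to the existence of the limit: bracket $\Sigma$ between a large and a small flat square (or use Dirichlet--Neumann bracketing against a finite cover of $\Sigma$ by coordinate charts, in each of which $\Delta_g$ is comparable to the flat Laplacian with explicitly computable eigenvalue counting $\asymp\lambda$), and apply the min-max principle; monotonicity of eigenvalues under domain inclusion then pinches $N(\lambda)\asymp\lambda$. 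Establishing the \emph{existence} of the limit (as opposed to mere two-sided bounds) does require the genuine Weyl asymptotic, which I would simply cite from the standard literature rather than reprove. The one point of care throughout is the presence of the boundary $\d\Sigma$: one must use that $\Sigma$ is a \emph{smooth} compact manifold with boundary so that Rellich's theorem and Weyl's law in their boundary versions apply, which is guaranteed by the geometric setup of Section~\ref{sec:cobord2}.
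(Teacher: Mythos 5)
Your proof is correct; the paper itself offers no argument here (the proposition is stated with a terminal \qed as a classical fact), and what you supply --- Rellich--Kondrachov for compactness of the resolvent, vanishing of the Dirichlet form for the kernel, and the boundary version of Weyl's law (with the $(2\pi)^{-1}$ normalisation correctly tracked to give $\lambda_k\sim 2k/\Vol_g(\Sigma)$) --- is exactly the standard reasoning being invoked. Your caveat that the kernel is the \emph{locally} constant functions when $\Sigma$ is disconnected is a fair observation, since the cobordisms of Section~\ref{sec:cobord2} are allowed to have several components; the paper's phrasing glosses over this, and in the sequel only the one-dimensionality per component (hence the $\sigma$-finiteness of $\hat\P_\Sigma$ after integrating out the constant mode) is actually used.
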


\begin{remark}\label{rem:intertwine}
Setting $\tilde g = e^{2\psi} g$, the map $\iota_{g,\tilde g} \colon 
\Phi \mapsto e^{\psi} \Phi$ is an isometry between $L^2(\Sigma, \Vol_{\tilde g})$ and
$L^2(\Sigma, \Vol_g)$. It is then straightforward to verify that
\begin{equ}
\iota_{\tilde g,g} \Delta_g \iota_{g,\tilde g} = 
e^{\psi}\Delta_{\tilde g}e^{\psi}\;.
\end{equ}
\end{remark}

Define now the space $H^1(\Sigma)$ as the completion of the space of smooth functions,
quotiented by constants, under the seminorm $\|F\|_1^2 = \CE_\Sigma(F,F)$. 
Given a metric $g$, we have a canonical compact embedding $H^1(\Sigma) \subset L^2(\Sigma, \Vol_g)$ which maps a class $F$ onto the unique representative such that
$\int F(z) \Vol_g(dz) = 0$. Using this identification and writing $H^{-1}$ for the 
dual of $H^1$, we interpret
elements $\eta \in H^{-1}$ as distributions that vanish on constant test functions.
In this way, the map $F \mapsto \eta_F$ with
$\eta_F(G) = \int F(z) G(z)\,\Vol_g(dz)$ yields a compact embedding $H^1 \subset H^{-1}$.

\begin{remark}
This is quite different from the isomorphism $H^1 \simeq H^{-1}$ given
by the Riesz representation theorem! We will never make use of the latter so
hopefully this will not cause any confusion.
\end{remark}

\subsection{The free field}

Let now $\{e_k\}_{k \ge 1}$ be an orthonormal basis of $H^1(\Sigma)$ consisting of eigenvectors of $\Delta_g$
(ordered by increasing value of the corresponding eigenvalues) and let $e_0$ be the 
function identically equal to one.   
As a consequence of \eqref{e:defDeltag}, these functions are also orthogonal in
$L^2$ and in $H^{-1}$ and one has 
\begin{equ}[e:norms]
\|e_k\|_{L^2}^2 = \lambda_k^{-1} \approx k^{-1}\;,\qquad
\|e_k\|_{H^{-1}}^2 = \lambda_k^{-2} \approx k^{-2}\;.
\end{equ}

Let now $\{\xi_k\}_{k \ge 0}$ be a sequence of i.i.d.\ 
standard Gaussian random variables. We then define the free field measure on $\Sigma$
in the following way. Setting
\begin{equ}[e:defh]
h = \sum_{k \ge 1} \xi_k e_k\;,
\end{equ}
we note that since $\sum_k \|e_k\|_{H^{-1}}^2 < \infty$ by \eqref{e:norms}, this series 
converges in probability in $H^{-1}$ to a random element $h$ of 
$H^{-1}$, the law of which we denote by $\P_\Sigma^{(0)}$. 

The measure $\P_\Sigma^{(0)}$ is Gaussian with covariance 
\begin{equ}
\CG(z,u) = \E h(z)h(u) =  \sum_{k \ge 1} e_k(z)e_k(u)\;,
\end{equ}
that is well-defined for all $z \neq u$. The function $\CG$ is the unique
symmetric function on $\Sigma \times \Sigma$ such that $\CG(z,\cdot)$ and 
$\CG(\cdot,u)$ both have normal derivative vanishing on $\d\Sigma$, 
$\int \CG(z,u)\,\Vol_g(du) = 0$ for every $z$, and 
$\nabla_z\nabla_u \CG(z,u) = 2\pi\delta(z-u)$.
The latter identity automatically holds in 
any (holomorphic) chart since both sides transform in the same way
under conformal transformations in dimension $2$. In particular, 
one has 
\begin{equ}[e:Green]
\CG(z,u) = \log\frac{1}{|z-u|} + \CO(1)\;,
\end{equ}
as $|z-u| \to 0$.
Note that if we make the dependence of $\CG$ on $g$ explicit, one has
\begin{equs}
\CG_{\tilde g}(z,u) &= \CG_g(z,u) - \int \CG_g(z,v)\Vol_{\tilde g}(dv)
 - \int \CG_g(v,u)\Vol_{\tilde g}(dv)\\
 &\qquad  + \int\kern-.7em\int \CG_g(v,w)\Vol_{\tilde g}(dv)\Vol_{\tilde g}(dw)\;,
\end{equs}
thus showing that $h$ depends on $g$ only through a random constant.
One can get rid of this dependence by ``flattening out'' the constant mode in the
following way. Setting
$\tau^{(c)} \colon h \mapsto h+c$,
we define somewhat similarly to before the free field measure 
$\hat \P_\Sigma$ by $\hat \P_\Sigma = \int_{\R^d} \bigl(\tau^{(c)}_* \P^{(0)}_{\Sigma}\bigr)(d\Phi)\,dc$.


\begin{remark}
One can define negative fractional Sobolev spaces $H^{-s}$ by using fractional
powers of $\Delta_g$. One can then see similarly that $h$ defines a random element 
of $H^{-s}$ for every $s > 0$, but not of $L^2$.
\end{remark}

The analogue of the Brownian bridge measure $\hat \P_{2t}$ from the previous section
is then given by $\hat \P_{\hat \Sigma}$, where $\hat \Sigma = \tr(\Sigma \circ \Sigma^*)$.
An important role is being played by the restriction of the free field measure 
$\hat \P_\Sigma$ to the boundary $\d\Sigma$ (or indeed to some smooth simple curve lying in the
interior of $\Sigma$). It is a classical result \cite{Trace} that there is a (unique) bounded
trace operator $\Pi_\Sigma\colon H^1(\Sigma) \to L^2(\d\Sigma)$ extending the usual restriction of continuous 
functions. Since $H^1(\Sigma)$ is the Cameron--Martin space of the Gaussian measure $\hat \P_\Sigma$ 
(ignoring here the inessential complication coming from the fact that this measure is only
$\sigma$-finite due to the constant mode), it follows from standard Gaussian measure theory
results \cite[Sec.~4.3]{IntroSPDEs} that $\Pi_\Sigma$ extends uniquely to a linear subspace of full $\hat \P_\Sigma$-measure,
thus yielding a Gaussian measure $\hat \P_{\d\Sigma} = (\Pi_\Sigma)_* \hat \P_\Sigma$ on any Hilbert space $\CH$ containing $L^2$ 
and such that the embedding $L^2 \hookrightarrow \CH$ is Hilbert--Schmidt, for example $\CH = H^{-1}(\d\Sigma)$
will do. 

In order to describe the measure $\hat \P_{\d\Sigma}$, we first note that any smooth function $\Phi\colon \Sigma \to \R$
can be written uniquely as $\Phi = \Phi_0 + \Phi_\d$ where $\Phi_0$ vanishes on $\d\Sigma$ while 
$\Phi_\d$ is harmonic in the interior of $\Sigma$. In particular, it follows from \eqref{e:boundary}
that one has $\CE_\Sigma(\Phi_0, \Phi_\d) = 0$, so that this yields a decomposition of $H^1(\Sigma)$ into 
orthogonal subspaces:
\begin{equ}[e:decomp]
H^1(\Sigma) = H^1_0(\Sigma) \oplus H^1_\d(\Sigma)\;.
\end{equ}
One furthermore has
\begin{equ}
\CE_\Sigma(\Phi_\d,\Phi_\d) = \frac{1}{2\pi}\oint_{\d\Sigma} \Phi(z)\,\d_\nu \Phi_\d(z)\,|dz|\;,
\end{equ}
which naturally leads to the introduction of the Dirichlet to Neumann operator
$D_\Sigma \colon \Phi \mapsto \d_\nu \Phi_\d$. While $\Phi$ is defined on all of $\Sigma$, 
$\Phi_\d$ only depends on the restriction of $\Phi$ to $\d\Sigma$, so $D_\Sigma$ is naturally
interpreted as an unbounded operator on $L^2(\d\Sigma)$.

To understand this operator, consider the simplest possible case, namely that of 
a disk $\Sigma = D = \{z\,:\, |z| \le 1\}$, so that $\d\Sigma \simeq S^1$. 
The harmonic extension of $e_n \colon \theta \mapsto \cos(n\theta)$ to $D$ is then 
given by $z \mapsto \Re z^n$, and similarly for $e_n^*(\theta) = \sin(n\theta)$ which extends to $\Im z^n$.
Along any ray $r \mapsto re^{i\theta}$, these functions are proportional to $r^n$, whence we conclude that
\begin{equ}
D_\Sigma e_n = n e_n\;,\qquad D_\Sigma e_n^* = n e_n^*\;.
\end{equ}
In particular $D_\Sigma = \sqrt{-\Delta}$, so that the space $H^1_\d(D)$ introduced in \eqref{e:decomp}
does in fact coincide with the fractional Sobolev space $H^{1/2}(S^1)$. 

It follows that in this particular case the restriction of the free field to $\d\Sigma$ yields
a random distribution which (ignoring the constant mode) 
can be written as the random Fourier series
\begin{equ}[e:GFFcircle]
h = \sum_{n \ge 1} \sqrt{\frac2{n}}\bigl(\xi_n e_n + \xi_n^* e_n^*\bigr)\;,
\end{equ}
where the $\xi_n$ and $\xi_n^*$ are two independent sequences of i.i.d.\ normal random variables.
Here, the $2$ comes from the fact that $e_n$ has $L^2$ norm equal to $1/\sqrt{2}$.

\begin{remark}
While the case of general $\Sigma$ does of course yield something different, it turns out that 
the measure $\hat \P_{\d\Sigma}$ is always equivalent to the law of independent copies of $h$ as in \eqref{e:GFFcircle},
one for each connected component of $\d\Sigma$. 
\end{remark}

\subsection{Segal's axioms}
\label{sec:Segal}

We now discuss how to generalise the construction of \eqref{e:exprPV}
to this two-dimensional setting and in particular how to obtain the analogue
of the Markov property as in Corollary~\ref{cor:SG}. 
For this we assume that, for every cobordism $\Sigma$, we are given a measurable
function $V_\Sigma \colon H^{-1}(\Sigma) \to \R$ defined modulo $\hat \P_\Sigma$-null sets,
with the following properties:
\begin{itemize}
\item \textbf{Locality.} If $\Sigma = \Sigma_1 \circ\Sigma_2$, writing $\Pi_{k} \colon H^{-1}(\Sigma) \to H^{-1}(\Sigma_k)$ for the restriction operator, one has the $\hat \P_\Sigma$-almost sure identity $V_\Sigma(\Phi) = V_{\Sigma_1}(\Pi_1\Phi) + V_{\Sigma_2}(\Pi_2\Phi)$. The same holds if $\Sigma = \Sigma_1 \otimes \Sigma_2$.
\item \textbf{Coercivity.} For every cobordism $\Sigma$ (without degenerate components), the measure $\exp(-V_\Sigma(\Phi))\,\hat \P_\Sigma(d\Phi)$
is finite.
\end{itemize}

\begin{remark}
Since the cobordism $\Sigma$ is already implicit in the field $\Phi$, we will
usually simply write $V(\Phi)$ rather than $V_\Sigma(\Phi)$.
\end{remark}

We then define as in \eqref{e:exprPV}
\begin{equ}[e:defPVtwo]
\P_\Sigma = (\det_\zeta \Delta_g)^{-1/2}\hat \P_{\hat \Sigma}\;,\qquad 
\P^V_\Sigma(d\Phi) = \exp \bigl(-V(\Phi)\bigr)\,\P_\Sigma(d\Phi)\;.
\end{equ}
Here, the operator $\Delta_g$ is as in \eqref{e:defDeltag}, but for the 
manifold without boundary $\hat \Sigma$.

The analogue of the construction of Section~\ref{sec:cobord1} is then again a monoidal functor
$\CF^V$, but this time from $\CC^{(2)}$ to $\Hil$, defined as follows. 
Given any $A \in \Ob \CC^{(2)}$, we write $\hat \P_{A}$ for the measure on $H^{-1}(A)$ 
given by the law of independent copies of $h$ as in \eqref{e:GFFcircle}, one for each connected
component of $A$ (which are isometric to $S^1$). Note that the isometry $A \to (S^1)^{|A|}$
with $|A|$ the number of connected components of $A$ is not canonical since we can rotate
each of these circles. However, the law of $h$ is invariant under such rotations,
so that the measure $\hat \P_{A}$ is well defined. 
We then set $\CF^V(A) = \CH_A$, the space of half-densities on $H^{-1}(A)$ associated to the
measure class of $\hat \P_{A}$. One has canonical identifications $H^{-1}(A \sqcup B) = H^{-1}(A)\times H^{-1}(B)$
and $\hat \P_{A\sqcup B} \simeq \hat \P_{A} \otimes \hat \P_{B}$, so that $\CH_{A\sqcup B} \simeq \CH_A \otimes \CH_B$,
showing that $\CF^V$ respects the monoidal structure.

Let now $\Sigma \colon A\to B$ be a cobordism.
By Remark~\ref{rem:tensor} and the coercivity property of $V$ which guarantees that the measure $\P^V_\Sigma$ is finite, 
\begin{equ}
\D^V_\Sigma \eqdef \sqrt{(\Pi_\Sigma)_* \P^V_\Sigma}
\end{equ}
then yields an element
of $\CH_{\d\Sigma} \simeq \CH_A \otimes \CH_B$, similarly to what we just discussed, thus defining
a Hilbert--Schmidt operator $\CF^V(\Sigma)\colon \CH_A \to \CH_B$. 

\begin{remark}
In the case of degenerate components,
note that the rotation operators $\tau^{(\theta)} \colon H^{-1}(S^1)\to H^{-1}(S^1)$
induce rotation operators $\mu \mapsto \tau^{(\theta)}_* \mu$ acting on measures on $H^{-1}(S^1)$. Since 
the measure $\hat \P_{S^1}$ is invariant under rotations, this in turn yields a unitary action of the 
rotation group on $\CH_{S^1}$. This allows one to naturally associate unitary operators to degenerate
components of $\Sigma$.
\end{remark}

The following theorem is essentially a simplified version of some of the results 
of \cite{SegalSimple,Bootstrap}. 

\begin{theorem}
$\CF^V$ as defined above is a monoidal functor  $\CC^{(2)} \to \Hil$.
\end{theorem}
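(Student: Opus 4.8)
The plan is to verify the three defining properties of a symmetric monoidal functor: that $\CF^V$ respects the monoidal structure, that it sends the identity morphisms $\id_\theta$ to the correct unitary operators, and—the main point—that it is compatible with composition, i.e.\ $\CF^V(\Sigma_2 \circ \Sigma_1) = \CF^V(\Sigma_2)\circ \CF^V(\Sigma_1)$. The monoidal compatibility $\CH_{A\sqcup B}\simeq \CH_A\otimes\CH_B$ together with the identity $\hat\P_{A\sqcup B}\simeq\hat\P_A\otimes\hat\P_B$ has already been observed in the excerpt, and the behaviour under rotations follows from the rotation-invariance of $\hat\P_{S^1}$ noted in the last remark; these are essentially bookkeeping. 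So the heart of the argument is the gluing (Markov) property, which is the two-dimensional analogue of Corollary~\ref{cor:SG}.

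Here is how I would organise the gluing step. Write $\Sigma = \Sigma_2\circ\Sigma_1$ with $\Sigma_i\colon A_i\to A_{i+1}$, and let $\gamma\simeq A_2$ be the gluing curve inside $\Sigma$. The key structural input is the orthogonal decomposition of $H^1(\Sigma)$ into the functions vanishing on $\gamma$ and the functions harmonic off $\gamma$, exactly as in \eqref{e:decomp}: the part vanishing on $\gamma$ splits further as $H^1_0(\Sigma_1)\oplus H^1_0(\Sigma_2)$, and because $\CE_\Sigma(\Phi_0,\Phi_\d)=0$ by \eqref{e:boundary}, the free field $h$ on $\Sigma$ decomposes (in law) as an independent sum of a Dirichlet field on $\Sigma_1$, a Dirichlet field on $\Sigma_2$, and the harmonic extension off $\gamma$ of the trace $h\restriction_\gamma$. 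Conditionally on $h\restriction_\gamma = \varphi$, the restrictions of $h$ to $\Sigma_1$ and to $\Sigma_2$ are therefore \emph{independent}, each distributed as a Dirichlet free field plus the deterministic harmonic extension of $\varphi$. Integrating out the Dirichlet fields and using the locality property $V_\Sigma(\Phi)=V_{\Sigma_1}(\Pi_1\Phi)+V_{\Sigma_2}(\Pi_2\Phi)$, the weight $\exp(-V_\Sigma)$ factorises across the two pieces, so that pushing $\P^V_\Sigma$ forward to the boundary traces on $A_1$ and $A_3$ and conditioning on the trace on $A_2$ gives precisely the kernel product expressing $\CF^V(\Sigma_2)\circ\CF^V(\Sigma_1)$ as an integral operator with kernel $\int \D^V_{\Sigma_2}(\varphi,\cdot)\,\D^V_{\Sigma_1}(\cdot,\varphi)\,\hat\P_{A_2}(d\varphi)$. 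This is the same computation as in the proof of the Proposition and of Corollary~\ref{cor:SG}, now carried out one dimension up, with the Brownian bridge decomposition \eqref{e:decompBM} replaced by the harmonic decomposition \eqref{e:decomp}.

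The one genuinely nontrivial point—and the reason the determinant prefactor appears in \eqref{e:defPVtwo}—is that the Gaussian free-field measures on $\hat\Sigma_1$, $\hat\Sigma_2$, and $\hat\Sigma$ do not glue with unit constant: the Dirichlet field on $\hat\Sigma = \tr(\Sigma\circ\Sigma^*)$ splits into Dirichlet fields on the two copies of $\Sigma$ plus a boundary field, but the \emph{normalisations} of these Gaussian measures differ by a ratio of functional determinants. The required identity is the Mayer--Vietoris / BFK-type gluing formula $\det_\zeta\Delta_{\hat\Sigma} = \det_\zeta\Delta_{\hat\Sigma_1}\cdot\det_\zeta\Delta_{\hat\Sigma_2}\cdot\det D_\gamma$ (up to explicit local factors), where $D_\gamma$ is the Dirichlet-to-Neumann jump operator along $\gamma$; this is exactly what makes the prefactors $(\det_\zeta\Delta_g)^{-1/2}$ in \eqref{e:defPVtwo} transform consistently under composition, and it plays the role that the elementary identity $2t/\det_\zeta B = (4\pi t)^{-1}$ played in the zero-dimensional case. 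I expect verifying this determinant-gluing identity (and checking that the finite-dimensional Gaussian computation of the Proposition really does produce precisely this ratio in the infinite-dimensional limit) to be the main obstacle; everything else is a routine combination of the orthogonal decomposition, locality, coercivity (which guarantees all measures in sight are finite, so the operators are Hilbert--Schmidt), and the already-established monoidal and rotation compatibilities. Finally, one checks $\CF^V$ on the degenerate identity morphisms directly from the definition via the rotation action, closing the verification that $\CF^V$ is a monoidal functor.
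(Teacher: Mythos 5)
Your proposal is correct and follows essentially the same route as the paper's sketch: the heart in both cases is the gluing step via the orthogonal decomposition of $H^1$ across the gluing curve into two Dirichlet pieces plus the harmonic extension of the trace, combined with the locality of $V$ and the Dirichlet-to-Neumann bookkeeping that turns the product of the two boundary kernels into the kernel of the composed cobordism. The one place you go beyond the paper is in explicitly naming the BFK-type determinant gluing formula as what makes the $(\det_\zeta \Delta_g)^{-1/2}$ prefactors compose; the paper's sketch deliberately sidesteps this by working formally and ``pretending'' the $\zeta$-determinants cancel against the Gaussian normalisation, so you have correctly pinpointed where the rigorous version requires additional input.
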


\begin{proof}[Sketch of proof]
We proceed at the formal level, ``pretending'' that the spaces $H^{-1}(\Sigma)$ are finite-dimensional
and that the $\zeta$-regularised determinants behave like ``real'' determinants. In this way, we hope to 
highlight the essence of the argument without getting bogged down in technicalities.

We fix $\Sigma=\Sigma_2 \circ \Sigma_1$, the composition of two cobordisms 
$\Sigma_1\colon A_1 \to A_2$ and $\Sigma_2\colon A_2 \to A_3$. We assume for simplicity that neither has 
a degenerate component and that every connected component of $\Sigma$ has a non-trivial boundary. 
Our aim is then to show that $\CF^V(\Sigma) = \CF^V(\Sigma_2)\circ \CF^V(\Sigma_1)$.
We note that, at a formal level, one has
\begin{equ}
\P^V_{\Sigma_1}(d\Phi) = \exp \Bigl(- \f12 \CE_{\hat \Sigma_1}(\Phi,\Phi)- V_{\hat \Sigma_1}(\Phi)\Bigr)\,d\Phi\;,
\end{equ}
since the first part of \eqref{e:defPVtwo} cancels out the determinant that would otherwise multiply the 
Gaussian density. Here, even though $\P^V_{\Sigma_1}$ is a measure on $H^{-1}(\hat \Sigma_1)$, we think
of $\Phi$ as taking values in $H^1(\hat \Sigma_1)$, at least as far as the first term in the exponent is concerned.

At this point, we note that since $\hat \Sigma_1$ consists of two copies of $\Sigma_1$ glued together
symmetrically along their joint boundary, we have the decomposition
\begin{equ}[e:decompH1]
H^1(\hat \Sigma_1) = H^1_0(\Sigma_1) \oplus H^1_0(\Sigma_1) \oplus H^{1/2}(\d\Sigma_1)\;.
\end{equ}
This corresponds to the decomposition of any smooth function $\Phi \colon\hat \Sigma_1 \to \R$
as
\begin{equ}
\Phi = \Phi_0^{(1)} + \Phi_0^{(2)} + E (\Phi \restr \d\Sigma_1) \;,
\end{equ}
(here $\restr$ denotes restriction) where $E_{\hat \Sigma_1} \colon H^{1/2}(\d\Sigma_1) \to H^1(\hat \Sigma_1)$ is the harmonic extension to 
$\hat \Sigma_1 \setminus \d\Sigma_1$,
$\Phi_0^{(1)}$ is supported on $\Sigma_1$ (and vanishes on $\d\Sigma_1$), 
and $\Phi_0^{(2)}$ is supported on $\Sigma_1^*$ (which is canonically identified with $\Sigma_1$ as a set). 
Note that furthermore
\begin{equ}
\CE_{\hat \Sigma_1}(\Phi,\Phi) =  \CE_{\Sigma_1}^{(0)}(\Phi_0^{(1)},\Phi_0^{(1)}) + \CE_{\Sigma_1}^{(0)}(\Phi_0^{(2)},\Phi_0^{(2)})
+ \frac{1}{\pi} \oint_{\d\Sigma_1} \Phi(z) D_{\Sigma_1} \Phi(z)\,|dz|\;.
\end{equ}
Here, the factor $\frac{1}{\pi}$ as opposed to $\frac{1}{2\pi}$ in the last term 
comes from the fact that we have one contribution from each of the
two copies of $\Sigma_1$ glued along $\d\Sigma_1$. We also write $\CE_{\Sigma_1}^{(0)}$ to emphasise that this
bilinear form enforces Dirichlet boundary conditions. 
Since furthermore the locality property of $V$ implies that
\begin{equ}
V_{\hat \Sigma_1}(\Phi) = V_{\Sigma_1}(\Phi_0^{(1)} + E_{\Sigma_1}\Phi) + V_{\Sigma_1}(\Phi_0^{(2)} + E_{\Sigma_1}\Phi)\;,
\end{equ}
this strongly hints that we can write
\begin{equs}
(\Pi_{\Sigma_1})_*\P^V_{\Sigma_1}(d\Phi) 
 &= \biggl(\int_{(\Sigma_1)} \exp \Bigl(-\frac12 \CE_{\Sigma_1}^{(0)}(\Psi,\Psi)  -V_{\Sigma_1}(\Psi + E_{\Sigma_1} \Phi)\Bigr)\,d\Psi \biggr)^2\\
 &\qquad \times \exp \Bigl(- \frac{1}{2\pi}\scal{\Phi, D_{\Sigma_1} \Phi}_{\d\Sigma_1}\Bigr)\,d\Phi\;,
\end{equs}
where now $\Phi$ denotes an element of $H^{1/2}(\d\Sigma_1)$ and $\scal{\cdot,\cdot}_{\d\Sigma_1}$ denotes
the usual scalar product of $L^2(\d\Sigma_1)$. We write $(\Sigma_1)$ as the domain of integration to 
remind ourselves that we are integrating over some space of functions / distributions on $\Sigma_1$, but we are 
intentionally vague as to what this space is exactly.
This yields
\begin{equs}
\D^V_{\Sigma_1}(d\Phi)
 &= \int_{(\Sigma_1)} \exp \Bigl(-\f12 \CE_{\Sigma_1}^{(0)}(\Psi,\Psi)  - V_{\Sigma_1}(\Psi + E_{\Sigma_1} \Phi)\Bigr)\,d\Psi\\
 &\qquad\times \exp \Bigl(- \frac1{4\pi}\scal{\Phi, D_{\Sigma_1} \Phi}_{\d\Sigma_1}\Bigr)\,\sqrt{d\Phi}\label{e:weight}\\
 &\eqdef \CD_{\Sigma_1}^V(\Phi) \exp \Bigl(- \frac1{4\pi}\scal{\Phi, D_{\Sigma_1} \Phi}_{\d\Sigma_1}\Bigr)\,\sqrt{d\Phi}\;.
\end{equs}
Since $\d\Sigma_1 \simeq A_1 \sqcup A_2$, we can write $\Phi = (\Phi_1,\Phi_2)$ with 
$\Phi_k \in H^{1/2}(A_k)$. With this notation, our goal is to show that one has
\begin{equ}[e:wantedIdentity]
\int \D^V_{\Sigma_1}(d\Phi_1,d\Phi_2)\,\D^V_{\Sigma_2}(d\Phi_2,d\Phi_3)
= \D^V_{\Sigma}(d\Phi_1,d\Phi_3)\;,
\end{equ}
where the integration runs over the variable $\Phi_2$.

Let us now for the moment concentrate on the factor appearing on the second line of \eqref{e:weight};
let's call it $\D_{\Sigma_1}(d\Phi)$.
 Write furthermore $D_{j,k}^{(1)} \Phi_k$ as a shorthand for the function on
$A_j$ given by $\d_\nu E_{j,k}^{(1)}\Phi_k$, where $E_{j,k}^{(\ell)} \Phi_k$ is the harmonic function on
$\Sigma_\ell$ which agrees with $\Phi_k$ on $A_k$ and vanishes on $\d\Sigma_1 \setminus A_k$. 
With these notations, one then has
\begin{equ}
\scal{\Phi, D_{\Sigma_1} \Phi}_{\d\Sigma_1}
= \scal{\Phi_1,D_{1,1}^{(1)}\Phi_1}_{A_1} + \scal{\Phi_2,D_{2,2}^{(1)}\Phi_2}_{A_2}
+ 2\scal{D_{2,1}^{(1)}\Phi_1,\Phi_2}_{A_1}\;.
\end{equ}
(Note that the adjoint of $D_{1,2}^{(1)}$ is $D_{2,1}^{(1)}$ by symmetry.)
Writing 
\begin{equ}
D_{2,2} = D_{2,2}^{(1)} + D_{2,2}^{(2)}\;,
\end{equ}
as well as 
\begin{equ}
M\Phi = D_{2,2}^{-1}\bigl(D_{2,1}^{(1)}\Phi_1+D_{2,3}^{(2)}\Phi_3\bigr)\;,
\end{equ}
it follows that 
\begin{equs}
\D_{\Sigma_1}&(d\Phi_1,d\Phi_2)\D_{\Sigma_2}(d\Phi_2,d\Phi_3) \label{e:productDs}\\
&= \exp \Bigl(-\frac1{4\pi} \scal{\Phi_2+M\Phi,D_{2,2}(\Phi_2 +M\Phi)}_{A_2} \Bigr)d\Phi_2 \sqrt{d\Phi_1d\Phi_3}\\
&\quad \times \exp \Bigl(-\frac1{4\pi}\scal{\Phi_1,D_{1,1}^{(1)}\Phi_1}_{A_1} -\frac1{4\pi} \scal{\Phi_3,D_{3,3}^{(2)}\Phi_3}_{A_3}
+ \frac1{4\pi} \scal{M\Phi,D_{2,2} M\Phi}_{A_2}\Bigr)\;.
\end{equs}

\begin{remark}\label{rem:jump}
It is not obvious a priori that $D_{2,2}$ is surjective so that $M$ is well defined. 
Observe though that $D_{2,2}\Phi_2$ 
is nothing but the jump in the normal derivative across $A_2$ of the harmonic extension $E\Phi_2$ of $\Phi_2$ to 
$\Sigma \setminus A_2$ with Dirichlet boundary conditions on $\d\Sigma$. In order to have $D_{2,2}\Phi_2 = 0$,
that harmonic extension must be smooth across $A_2$ and therefore be harmonic on all of $\Sigma$. However,
since every connected component of $\Sigma$ has a non-trivial boundary, there is no non-zero harmonic function 
on $\Sigma$ vanishing on $\d\Sigma$.

In fact, even if we have a connected component of $\Sigma$ without boundary intersecting $A_2$, then
$D_{2,1}^{(1)}\Phi_1$ and $D_{2,3}^{(2)}\Phi_3$ vanish there so we can still define $M$ canonically.
\end{remark}

Let us now examine in more detail the last line in \eqref{e:productDs}. We can rewrite the bilinear form
in the exponential as
\begin{equ}[e:formexp]
\scal{\Phi_1,D_{1,1}^{(1)}\Phi_1 - D_{1,2}^{(1)}\Xi}_{A_1} + \scal{\Phi_3,D_{3,3}^{(2)}\Phi_3 - D_{3,2}^{(2)}\Xi}_{A_3}
\end{equ}
where $\Xi = M\Phi$ is the unique solution to
\begin{equ}[e:defXi]
D_{2,2}\Xi = \bigl(D_{2,1}^{(1)}\Phi_1+D_{2,3}^{(2)}\Phi_3\bigr)\;.
\end{equ}
In other words, the harmonic extension $E\Xi$ of $\Xi$ to $\Sigma \setminus A_2$ vanishing on $A_1 \cup A_3$ 
is such that the jump
in its normal derivative at $A_2$ equals that of the function that agrees with $E_{2,1}^{(1)}\Phi_1$ on $\Sigma_1$ 
and with $E_{2,3}^{(2)}\Phi_3$ on $\Sigma_2$. This implies that \eqref{e:formexp} is nothing but
\begin{equ}
\scal{(\Phi_1,\Phi_3),D_\Sigma(\Phi_1,\Phi_3)}_{A_1 \cup A_3}\;,
\end{equ}
so that \eqref{e:productDs} can be rewritten as
\begin{equs}
\D_{\Sigma_1}&(d\Phi_1,d\Phi_2)\D_{\Sigma_2}(d\Phi_2,d\Phi_3) \\
&= \exp \Bigl(-\frac1{4\pi} \scal{\Phi_2+M\Phi,D_{2,2}(\Phi_2 +M\Phi)}_{A_2} \Bigr)d\Phi_2\, \D_{\Sigma}(d\Phi_1,d\Phi_3)\;.
\end{equs}

In order to show the desired functorial property, it therefore remains to show that 
\begin{equs}
\CD_{\Sigma}^V(\Phi_1,\Phi_3) &=
\int_{(A_2)} \CD_{\Sigma_1}^V(\Phi_1,\Phi_2) \CD_{\Sigma_2}^V(\Phi_2,\Phi_3) \exp \Bigl(-\frac1{4\pi} \scal{\Phi_2+M\Phi,D_{2,2}(\Phi_2 +M\Phi)}_{A_2} \Bigr)d\Phi_2 \\
&=\int_{(A_2)} \CD_{\Sigma_1}^V(\Phi_1,\Phi_2-M\Phi) \CD_{\Sigma_2}^V(\Phi_2-M\Phi,\Phi_3) \exp \Bigl(-\frac1{4\pi} \scal{\Phi_2,D_{2,2}\Phi_2}_{A_2} \Bigr)d\Phi_2 \;.
\end{equs}

At this stage, as a consequence of Remark~\ref{rem:jump} and a reasoning very similar to the one for \eqref{e:decompH1}, 
we realise that one has the decomposition
\begin{equ}
H_0^1(\Sigma) = H_0^1(\Sigma_1) \oplus H_0^1(\Sigma_2) \oplus H^{1/2}(A_2)\;,
\end{equ}
with $H^{1/2}(A_2) \subset H_0^1(\Sigma)$ via the harmonic extension $E$ as in the remark, and that 
with this identification $\CE_{\Sigma}^{(0)}$
coincides with $\f1{2\pi}\scal{\cdot,D_{2,2}\cdot}_{A_2}$ on that last component.

As a consequence, we can write
\begin{equs}
\CD_{\Sigma}^V(\Phi_1,\Phi_3) &= 
\int \exp \Bigl(-\frac12 \CE_{\Sigma_1}^{(0)}(\Psi_1,\Psi_1)-\frac12 \CE_{\Sigma_2}^{(0)}(\Psi_2,\Psi_2)
- \frac1{4\pi}\scal{\Phi_2,D_{2,2}\Phi_2}_{A_2}\Bigr)\\
&\times \exp \Bigl(- V_{\Sigma_1}\bigl(\Psi_1 + E \Phi_2 + E_\Sigma(\Phi_1,\Phi_3)\bigr)\Bigr)\\
&\times \exp \Bigl(- V_{\Sigma_2}\bigl(\Psi_2 + E \Phi_2 + E_\Sigma(\Phi_1,\Phi_3)\bigr)\Bigr)d\Psi_1 d\Psi_2d\Phi_2\;,
\end{equs}
where the arguments of $V_{\Sigma_1}$ and $V_{\Sigma_2}$ are restricted to the appropriate manifold.
It therefore remains to show that
\begin{equ}[e:finalIdentity]
\bigl(E \Phi_2 + E_\Sigma(\Phi_1,\Phi_3)\bigr) \restr \Sigma_1 
= E_{\Sigma_1} \bigl(\Phi_1,\Phi_2 - M\Phi\bigr)\;,
\end{equ}
and the analogous property for $\Sigma_2$. Both functions are harmonic and agree with $\Phi_1$ on $A_1$,
so it remains to show that their values on $A_2$ agree. Since $E \Phi_2$ and $E_{\Sigma_1}(0,\Phi_2)$ agree on $\Sigma_1$, we can assume that $\Phi_2 = 0$ without loss of generality.

Write now $E_{\Sigma_1,\Sigma_2}(\Phi_1,\Phi_2,\Phi_3)$ for the function that is harmonic
on $\Sigma \setminus (A_1 \cup A_2 \cup A_3)$ and agrees with $\Phi_k$ on $A_k$.
It then follows from the discussion after \eqref{e:defXi} that the normal derivatives of
$E_{\Sigma_1,\Sigma_2}(\Phi_1,0,\Phi_3)$ and $E_{\Sigma_1,\Sigma_2}(0,M\Phi,0)$
have the same jump discontinuity across $A_2$. By linearity, it then follows that
$E_{\Sigma_1,\Sigma_2}(\Phi_1,-M\Phi,\Phi_3)$ is smooth across $A_2$ and therefore has to agree with $E_\Sigma(\Phi_1,\Phi_3)$ on all of $\Sigma$. Since on the other hand
$E_{\Sigma_1,\Sigma_2}(\Phi_1,-M\Phi,\Phi_3) = E_{\Sigma_1}(\Phi_1,-M\Phi)$ on $\Sigma_1$,
we have shown that the identity \eqref{e:finalIdentity} holds, thus completing the proof.
\end{proof}

\section{Liouville theory}

If we want $V$ to satisfy the locality property, it is natural to consider 
expressions of the form
\begin{equ}[e:localV]
V_\Sigma(\Phi) = \int_\Sigma V(\Phi(z))\,\Vol_g(dz)\;.
\end{equ}
Unfortunately, this appears nonsensical since we furthermore require $V_\Sigma$ to
be defined for $\hat \P_\Sigma$-almost every $\Phi$ and this measure is only
supported on spaces of distributions where point evaluations aren't defined. 
It is however possible to remedy this problem in the following way. 

As already discussed in the previous section,  $\hat \P_\Sigma$-almost every $\Phi$
can be restricted to a smooth curve. Given $z \in \Sigma$, we then write 
$\Gamma_\eps(z)$ for the ``circle of radius $\eps$'', i.e.\ the set of points 
at distance  $\eps$ (with respect to the metric $g$) of $z$. We can then define
\begin{equ}[e:defPhieps]
\Phi_\eps(z) = |\Gamma_\eps(z)|^{-1}\int_{\Gamma_\eps(z)} \Phi(u)\,|du|_g\;,
\end{equ}
where $|\Gamma_\eps(z)| = \int_{\Gamma_\eps(z)}|du|_g$ denotes the perimeter of the circle.
As a consequence of \eqref{e:Green}, one then has
\begin{equ}
\E |\Phi_\eps(z)|^2 = |{\log\eps}| + \CO(1)\;,
\end{equ}
where furthermore, since the law of $\Phi_\eps$ only depends on $g$ via its constant mode,
the $\CO(1)$ term is independent of the choice of $g$, except possibly for a constant
(i.e.\ independent of $z$) term. See for example \cite{MR3465434} for more details
in the particular case when $\Sigma$ is a sphere.

So while there is no hope for $\Phi_\eps^2(z)$ to converge to a limit, even in the sense
of distributions, as $\eps \to 0$, one can show 
that the ``Wick square''
\begin{equ}
\Wick{\Phi_\eps^2}(z) \eqdef \Phi_\eps^2(z) - |{\log\eps}|\;,
\end{equ}
converges $\hat \P_\Sigma$-almost surely to a limiting distribution 
$\Wick{\Phi^2}$. The same is true also for higher Wick powers, defined by
\begin{equ}
\Wick{\Phi_\eps^n}(z) \eqdef H_n(\Phi_\eps(z), |{\log\eps}|)\;.
\end{equ}
It is then possible to set for example $V(\Phi) = \int_\Sigma \Wick{\Phi^4}(z)\,\Vol_g(dz)$ in order to obtain a local and coercive potential for which the construction
in Section~\ref{sec:Segal} can be carried out, see for example \cite{Nelson1,SegalSimple}. 

\subsection{Conformal changes of metric}

It is apparent from \eqref{e:defPhieps} that the distributions $\Wick{\Phi^n}$ 
obtained in this way are independent of the chosen coordinate system. They do however
transform non-trivially under conformal changes of metric! Indeed, if 
$\tilde g = e^{2\phi} g$
as before and writing $\Phi_{\eps,g}$ to make the dependence on $g$ explicit, one
has $\Gamma_{\eps,\tilde g}(z) \approx \Gamma_{e^{-\phi(z)}\eps,g}(z)$ for small values
of $\eps$ so that one would expect that
$\Phi_{\eps,\tilde g}(z) \approx \Phi_{e^{-\phi(z)}\eps,g}(z)$, 
yielding the almost sure identity $\Wick{\Phi^2_{\tilde g}} = \Wick{\Phi^2_{g}} + \phi$.
In the case of $\Wick{\Phi^4}$, we get additional terms proportional to $\Wick{\Phi^2}$,
and so we have genuinely different theories for every choice of metric $g$.

Liouville theory on the other hand transforms in a much ``nicer way'', and this is
what we would like to explain now. Choose $\mu, \gamma > 0$ (we will see later on that we
in fact have to take $\gamma < 2$) and set
\begin{equ}
Q = \frac\gamma2 + \frac2\gamma\;.
\end{equ}
We then set
\begin{equ}[e:VLiouville]
V_\eps^g(\Phi) = \int_\Sigma \Bigl(\frac{Q}{4\pi}R_g \Phi_\eps(z) + \mu\eps^{\frac{\gamma^2}{2}}e^{\gamma \Phi_\eps(z)} \Bigr)\,\Vol_g(dz)\;,
\end{equ}
where $R_g$ denotes the scalar curvature of the metric $g$.
Note first that it is at least somewhat plausible that this expression has a limit as
$\eps \to 0$ since, under $\hat\P_\Sigma$, we have by Gaussianity 
\begin{equ}
\E e^{\gamma \Phi_\eps(z)} = \exp \Bigl(\frac{\gamma^2}{2} \E \Phi_\eps^2(z)\Bigr)
= \exp \Bigl(\frac{\gamma^2}{2} |{\log\eps}| + \CO(1)\Bigr)
\approx \eps^{-\frac{\gamma^2}{2}}\;.
\end{equ}
The reason why $\gamma = 2$ is a threshold is
that one expects to have 
\begin{equ}
\sup_{z\in \Sigma} \Phi_\eps(z) = 2\log \eps + o(\log\eps)\;.
\end{equ}
This is because a relatively good ``cartoon'' for $\Phi_\eps$ is that of 
a collection of i.i.d.\ Gaussians with variance $|{\log\eps}|$ distributed on a 
grid of mesh size $\CO(\eps)$. Since the supremum of $N$ unit variance Gaussians
is given by $\sqrt{2\log N} + o(\sqrt{\log N})$, this suggests that one
has indeed 
\begin{equ}
\sup_{z\in \Sigma} \Phi_\eps(z)
\approx \sqrt{2\log (K\eps^{-2})} \sqrt{|{\log\eps}|} \approx 2|{\log\eps}|\;.
\end{equ}
We conclude that the tallest peaks of $e^{\gamma \Phi_\eps(z)}$ are of 
height about $\eps^{-2\gamma}$. Since one expects these peaks to have a diameter of
order $\eps$, any such peak would contribute about $\eps^{\frac{\gamma^2}{2}+2-2\gamma}$
to the integral in \eqref{e:VLiouville}. At $\gamma = 2$, this is $\CO(1)$ so that 
one would expect a drastic change in behaviour, which is indeed the case,
see for example \cite{Critical_der,Glassy}.

One feature of \eqref{e:VLiouville} is that its
limit behaves in an interesting way under conformal changes of metric.
Indeed, setting $\tilde g = e^\phi g$ as usual, we recall that the scalar curvature 
transforms under conformal changes of metric like
\begin{equ}[e:changeRg]
R_{\tilde g} = e^{-2\phi} \bigl(R_g - 2\Delta^{(g)} \phi\bigr)\;,
\end{equ}
where $\Delta^{(g)}$ denotes the usual Laplace--Beltrami operator with respect to
the metric $g$ (which equals $-2\pi \Delta_g$ defined in the previous section).
See for example \cite[Eq.~2.6]{LP87}, noting that their sign convention for the Laplacian is
opposite to the one used in \eqref{e:changeRg}.

Writing $\tilde\eps = e^{-\phi(z)}\eps$, one then has
\begin{equs}
V_\eps^{\tilde g}(\Phi) &= \int_\Sigma \Bigl(\frac{Q}{4\pi}\bigl(R_g +4\pi\Delta_g \phi\bigr) \Phi_{\eps,\tilde g}(z) + \mu\eps^{\frac{\gamma^2}{2}}e^{\gamma \Phi_{\tilde\eps,g}(z)+2\phi(z)} \Bigr)\,\Vol_g(dz)\;.
\end{equs}
Note now that on one hand one has
\begin{equ}
\int_\Sigma \Delta_g \phi(z) \Phi(z)\,\Vol_g(dz) = \CE_\Sigma(\phi,\Phi)\;,
\end{equ}
and on the other hand
\begin{equ}
\eps^{\frac{\gamma^2}{2}}e^{\gamma \Phi_{\tilde \eps,g}(z)+2\phi(z)}
=
{\tilde \eps}^{\frac{\gamma^2}{2}}e^{\gamma \Phi_{\tilde \eps,g}(z)+ \gamma Q \phi(z)}\;.
\end{equ}
Assuming that the limit $V^{g} = \lim_{\eps \to 0} V_\eps^g$ exists,
this shows that one has
\begin{equs}
\frac12 \CE_\Sigma(\Phi,\Phi) + V^{\tilde g}(\Phi)
&= \frac12 \CE_\Sigma(\Phi,\Phi) + V^{g}(\Phi + Q\phi)
- \frac{Q^2}{4\pi} \int R_g(z)\phi(z)\Vol_g(dz) \\
&\quad +  \CE_\Sigma(Q\phi,\Phi) \\
&= \frac12 \CE_\Sigma(\Phi + Q\phi,\Phi + Q\phi) + V^{ g}(\Phi + Q\phi) \\
&\quad - \frac{6Q^2}{24\pi} \Bigl(\int R_g(z)\phi(z)\Vol_g(dz) 
+2\pi \CE_\Sigma(\phi,\phi)\Bigr)\;.
\end{equs}
In other words, modulo the $\Phi$-independent factor appearing on the last line,
a conformal change of metric is equivalent to a simple shift of the field by $Q\phi$,
as well as a change in normalisation of the measure by some expression that only depends on $\phi$.
The reason for writing this last term in this way is conform to the literature
where this is called the ``conformal anomaly'' and the factor $6Q^2$ appearing there is called the 
``central charge'' of the theory. In fact, the central charge of Liouville theory happens to
be $1+6Q^2$ (and not just $6Q^2$) as a consequence of the fact that the exponential of
the same term, but without the factor $6Q^2$, appears as the logarithm of the ratio of 
$\sqrt{\det_\zeta \Delta_g}$ and $\sqrt{\det_\zeta \Delta_{\tilde g}}$,
see for example \cite{OPS}.

\subsection{Coercivity of Liouville theory}

We now argue under what conditions one can make Liouville theory
coercive, so that it falls (almost) into the framework developed in the 
previous section. Assuming that the scalar curvature $R$ of $g$ is constant,
we see that at least formally $V^g$ is of the form \eqref{e:localV} with
\begin{equ}
V(u) = \frac{QR}{4\pi} u + \mu e^{\gamma u}\;.
\end{equ}
Since $\gamma > 0$ this function grows at $+\infty$, but it only grows
at $-\infty$ when $R<0$. This suggests that Liouville theory is coercive only
when $g$ is conformally equivalent to a metric with negative scalar curvature.
By the uniformisation theorem, this is the case precise when 
$\Sigma$ has genus at least $2$.

If the genus of $\Sigma$ is less than $2$, we can insert conical singularities
which is achieved by adding point masses to the scalar curvature. 
Let us assume that $\Sigma$ is a sphere. By the previous discussion, combined
with the uniformisation theorem, we can reduce ourselves to the case of the round sphere
of area $1$, so that $R_g = 8\pi$.
Given
values $\alpha_i \in \R$ and $z_i\in\Sigma$, we would therefore be tempted to define
\begin{equ}
V_{(\alpha,z)}^{\tilde g}(\Phi) = V^{\tilde g}(\Phi)
- \sum_{i=1}^k \alpha_i \Phi(z_i)\;.
\end{equ}
(These are called \textit{insertions} and the parameter $\alpha_i$ is called the
\textit{weight} of the $i$th insertion.)
As already mentioned, we cannot consider point evaluations of the free field.
However, we have at least formally the identity
\begin{equ}
\Phi(z_i) = \CE_\Sigma\bigl(\Phi, \CG_{z_i}\bigr) + \int_\Sigma \Phi(z)\,\Vol_g(dz)\;,
\end{equ}
where $\CG_u$ is such that $\Delta_g \CG_u = \delta_u - 1$. 
(Here we have to add the constant term to guarantee that the right-hand side
of the Poisson equation has vanishing mean, which guarantees its solvability.)
As before, the resulting term $\sum_i \alpha_i \CE_\Sigma\bigl(\Phi, \CG_{z_i}\bigr)$ 
can be eliminated by a Girsanov shift, leading to a potential in the shifted
variables given by 
\begin{equ}[e:VLiouvilleInsertions]
V_{(\alpha,z)}^g(\Phi) = \lim_{\eps \to 0}\int_\Sigma \Bigl(\Bigl(2Q - \sum_{i=1}^k \alpha_i\Bigr) \Phi(z) + \mu\eps^{\frac{\gamma^2}{2}}e^{\gamma (\Phi_\eps + \sum_i \alpha_i \CG_{z_i})(z)} \Bigr)\,\Vol_g(dz)\;.
\end{equ}
This suggests very strongly that, in order to obtain a finite measure, 
one should impose the Seiberg bounds $\sum_{i=1}^k \alpha_i > 2Q$.
This is indeed the case \cite{MR3465434} and it turns out that the resulting
theory has functorial properties similar to those verified in the previous
section, except that the Riemann surfaces $\Sigma$ are furthermore equipped
with marked points $z_i$ to which weights $\alpha_i$ are attached \cite{Bootstrap}.

\subsection*{Acknowledgements}

{\small
I am grateful to Nicolas Bourbaki and Juhan Aru for reading through a draft version of 
these notes and pointing out a number of typos and imprecisions. 
}

\bibliographystyle{Martin}
\bibliography{Bourbaki}

\end{document}